\theoremstyle{definition}
\newtheorem{lemma}{Lemma}
\theoremstyle{definition}
\newtheorem{definition}{Definition}
\theoremstyle{definition}
\newtheorem*{definition*}{Definition}
\theoremstyle{definition}
\newtheorem{example}{Example}
\theoremstyle{definition}
\newtheorem{exampleR}{Reduced Example}
\theoremstyle{definition}
\newtheorem*{theorem*}{Theorem}
\theoremstyle{definition}
\newtheorem{theorem}{Theorem}
\theoremstyle{definition}
\newtheorem{algorithm}{Algorithm}
\newtheorem{question}{Question}
\newtheorem*{question*}{Question}
\newtheorem{questionR}{Reduced Question}
\newenvironment{customlemma}[1]
  {\innercustomlemma}
  {\endinnercustomlemma}
\newenvironment{customthm}[1]
  {\innercustomthm}
  {\endinnercustomthm}
\newcommand{\2}{\vspace{0.2cm}}
\newcommand{\quota}[1]{{q{(#1)}}}
\newcommand{\set}[1]{\left\{#1\right\}}
\newcommand{\Workers}{W}
\newcommand{\worker}[1]{w_{#1}}
\newcommand{\Firms}{F}
\newcommand{\firm}[1]{f_{#1}}
\newcommand{\match}{\mu}
\newcommand{\reduce}{\mathcal{R}}
\newcommand{\WithColor}[1]{#1} \newcommand{\WithoutColor}[1]{}       
\newcommand{\matchingC}[1]{yellow}
\begin{document}

\title{Finding all stable matchings with assignment constraints}
\author{
Gregory Z. Gutin\footnote{Computer Science Department, Royal Holloway University of London.} \hspace{.1in} Philip R.\ Neary\footnote{Economics Department, Royal Holloway University of London.} \hspace{.1in} Anders Yeo\footnote{IMADA, University of Southern Denmark.} $^{,}$\footnote{Department of Mathematics, University of Johannesburg.}
}

\date{\today}

\maketitle

\begin{abstract}
\noindent
In this paper we consider stable matchings subject to assignment constraints.
These are matchings that require certain assigned pairs to be included, insist that some other assigned pairs are not, and, importantly, are stable.
Our main contribution is an algorithm, based on the iterated deletion of unattractive alternatives, that determines if assignment constraints are compatible with stability.
Whenever there is a stable matching that satisfies the assignment constraints, our algorithm outputs all of them (each in polynomial time per solution). 
This provides market designers with (i) a tool to test the feasibility of stable matchings subject to assignment constraints, and (ii) a tool to implement them when feasible.
\end{abstract}

\newpage

\section{Introduction}\label{sec:intro}

In a two-sided market with workers on one side and firms on the other, a stable matching is an assignment of workers and firms from which joint deviations do not pay.
In this paper we study many-to-one two-sided matching markets, and our focus is on {\it stable matchings with assignment constraints}.
These are matchings that require some worker-firm pairs to be included, insist that some other worker-firm pairs are not, and, importantly, are stable.

Consider a labour market with junior doctors on one side and hospitals on the other.
Suppose that there is a market designer who is eager for doctor $A$, whose speciality is cardiology, to be matched with one of the four rural hospitals because each, from the designer's perspective, could use another cardiologist.
The designer is also keen for doctor $B$, whose speciality is paediatrics, not to be employed at the downtown hospital since the paediatrics department there is already well-staffed.
Suppose further that these two requirements are ``all that really matter'' to the designer in that how the remaining market participants are matched is immaterial.
Naturally, the designer would like to know:\ does there exist a labour market equilibrium (i.e., a stable matching) that adheres to these criteria?

Any and all variants of the market designer's question above can be addressed by defining a {\it list of assignment constraints}. 
\begin{definition*}\label{}
Given a many-to-one two-sided market with workers, $\Workers$, on one side and firms, $\Firms$, on the other, a {\it list of assignment constraints} specifies a subset of workers $\hat{\Workers} \subseteq \Workers$ and a subset of firms $\hat{\Firms} \subseteq \Firms$ such that:
\begin{enumerate}[label=(\roman*)]
\item
for each worker $\worker{} \in \hat{\Workers}$, there is either a subset of firms, $\Firms^{in}(\worker{})$, one of which must be $\worker{}$'s employer, or there is a subset of firms, $\Firms^{out}(\worker{})$, at which $\worker{}$ must not be employed, or there are both.

\item
for each firm $\firm{} \in \hat{\Firms}$, there is either a subset of workers, $\Workers^{in}(\firm{})$, such that the positions at $\firm{}$ must be filled by workers from the subset, or there is a subset of workers $\Workers^{out}(\firm{})$, each of whom must not be employed at $\firm{}$, or there are both.
\end{enumerate}
\end{definition*}

The main contribution of this paper is an algorithm that answers the following:

\2

{\sc Constrained Stable Matchings}:\ {\em Given a many-to-one two-sided market and a list of assignment constraints, is there is a stable matching that satisfies the assignment constraints?
If yes, generate all such stable matchings.} 

\2

In a big picture sense, {\sc Constrained Stable Matchings} is about questions of the form:\ “is there a stable matching in which...?”
The designer's issue is that matchings of the sort that they want might not be stable.\footnote{We are agnostic on why a market designer might rank different assignments differently. Moreover, we do not restrict the designer's preferences over assignments in any way; as in the example with junior doctors above, we allow for the designer to have strong views on the fortunes of some market participants while being completely disinterested in the circumstances of others.} 
The reason is immediate:\ it is the preferences of individual market participants, and not those of the designer, that determine which assignments are stable.
And implementing a matching that is not stable -- no matter how desirable to the designer -- seems ill advised as it is likely to unravel.
Given this, it would be useful for the designer to know in advance if and when constraints are compatible with stability.
Our algorithm allows the designer to do precisely this.
Furthermore, whenever a stable matching satisfying the constraints exists, our algorithm outputs all of them.

On the surface, {\sc Constrained Stable Matchings} imposes constraints on individual participants.
However, upon reflection, constraints of the form that we consider are really about what combinations of pairs are consistent with stability.
In fact, we will show that {\sc Constrained Stable Matchings} is equivalent to the following, seemingly far simpler, problem that concerns only pairs.

\2

{\sc Reduced Constrained Stable Matchings}: {\em Given two disjoint collections of worker-firm pairs $C^{in}$ and $C^{out}$, is there is a stable matching that includes all worker-firm pairs in $C^{in}$ and does not include any worker-firm pair in $C^{out}$?
If yes, generate all such stable matchings.}

\2

The {rural hospitals theorem} confirms that if a market participant is not matched in some stable matching then they are never matched in any, thereby answering the equivalent of {\sc Reduced Constrained Stable Matchings} for individual participants.\footnote{\label{fn:rural}The observation that the same market participants appear in every stable matching was made first by \cite{McVitieWilson:1970:BIT} for the original one-to-one environment and is sometimes referred as the ``lone wolf theorem''. The result was extended to richer many-to-one environments in \cite{Roth:1984:JPE} and \cite{GaleSotomayor:1985:AMM,GaleSotomayor:1985:DAM} where it was also noted that those on the side of the ``one'' are matched with the same number of individuals in every stable matching. {\cite{Roth:1984:JPE} was a study of the National Resident Matching Program (the NRMP - the system in the United States of America that matches junior doctors to hospitals), and this finding became known as ``the rural hospitals theorem'' due to the fact that the underserved hospitals (that are always underserved) were typically found in supposedly less desirable (rural) locations. The result was further extended in \cite{Roth:1986:E} where it was shown that any hospital with unfilled positions in some stable matching is, it turns out, assigned the same set of junior doctors in every stable matching.}}
While rightly celebrated, the result demonstrates that a market designer who insists on stability is powerless in influencing which collection of market participants ``are in'' and which collection of market participants ``are out''.
We believe this underscores the importance for our focus on what collections of pairs can/cannot appear together in some stable assignment, because influencing who the participants that ``are in'' will be paired with is the only dimension in which a designer can have any sway.
For this reason, we hope that our results will be of interest to practitioners of market design and not purely of theoretical interest.

While we view our algorithm's applicability to questions of interest as our paper's main contribution, we believe that the algorithm has one marked advantage over existing ones.
That advantage is its simplicity.
Our methods are easy to explain and even admit an intuitive graphical interpretation (see the figures throughout).
This stands in stark contrast to many existing algorithms in the matching literature that use machinery that is not only quite involved but is also difficult to ascribe economic meaning.
In fact, arguably all that is required to understand our procedure is familiarity with the definition of an unstable matching.
However, it should be noted that we interpret the causes of instability in a non-standard way.

The balance of the paper is as follows. 
In the remainder of the introduction we provide a quick overview of our algorithm and we discuss how our work fits within the literature.
Section~\ref{sec:Environment} introduces many-to-one matching problems, converts these problems to an equivalent formulation using directed graphs (all our results are obtained using this formulation), and restates our question of interest.
Section~\ref{sec:IDUA} presents the IDUA algorithm that reduces every matching problem to its normal form.
For the reader's convenience, Section~\ref{sec:Algorithm} is split into two parts. In Section \ref{sec:Detail} we formally state our algorithm, prove its correctness, and confirm its time complexity.
In Section \ref{sec:Overview} we demonstrate how the algorithm operates by laboriously working through the paper's main example, Example~\ref{ex:initial}, of a market that a fictitious market designer has placed constraints on.

\subsection{Overview of the algorithm}

Here we provide a sketch of how our algorithm operates.
We are deliberately loose, focusing primarily on the intuition behind the key steps of the algorithm and how it exploits structural properties of a matching market.
As mentioned above, a complete run of the algorithm is described in detail in Section~\ref{sec:Overview}.

The central tool in our approach is the {\it iterated deletion of unattractive alternatives} (IDUA), a procedure introduced first in \cite{BalinskiRatier:1997:} and used by us in \cite{GutinNeary:2023:GEB} to classify unique stable matchings.
IDUA reduces a matching problem by repeatedly deleting matchings that cannot be stable by singling out pairs that do not mesh with stability.
The issue is how to identify such pairs.

To fix ideas, consider a market that includes the workers $\worker{}$ and $\worker{}'$ and the firms $\firm{}$ and $\firm{}'$, and, to keep things simple, let us assume that each firm has only one position.
Now, consider a matching $\match$ that includes the pairs $(\worker{},\firm{}')$ and $(\worker{}',\firm{})$, and suppose further that both $\worker{}$ and $\firm{}$ would prefer to be paired together over their current assignment.
In this case the pair $(\worker{},\firm{})$ is said to be {blocking} for the matching $\match$, and $\match$ is said to be {unstable} as a result.
A matching is stable if it is not unstable.

The matching $\match$ above is deemed unstable because the pair $(\worker{}, \firm{})$ is, in a sense, ``strong'' relative to it.
The thinking is that since both $\worker{}$ and $\firm{}$ are better off if they leave their current partners for each other, that they will do exactly that.\footnote{Of course, once $\worker{}$ and $\firm{}$ pair up, both $\worker{}'$ and $\firm{}'$ are ``back on the market'' and this can have ramifications since each might be deemed an upgrade by other currently matched participants. Dynamics based on chain reactions of this kind are considered in \cite{RothVate:1990:E} and more recently in \cite{rudov:2024:arXivfragile}.}
But we note that the pair $(\worker{}, \firm{})$ is not a pair in $\match$. 
It is a pair {\it outside} of $\match$.
We propose an alternative explanation for why $\match$ is unstable that comes from examining what is going on {\it inside} the matching.
In this case, consider the pairs $(\worker{},\firm{}')$ and $(\worker{}',\firm{})$.
Each of these pairs contributes to the breakaway couple.
But in much the same way that the resulting blocking pair $(\worker{},\firm{})$ can be viewed as strong relative to $\match$, each of these pairs can be regarded as ``weak'' relative to it.
And it is the weaknesses of these pairs that results in the matching being unstable.
Here is the key point:\ there are conditions under which a pair is ``always weak'' in every matching that contains it.
This occurs when one member of the pair has an outside option that not only trumps the current one but is also guaranteed.
Our notion of unattractive alternatives formalises this.

Suppose that not only does $\worker{}$ prefer $\firm{}$ to $\firm{}'$, but that $\worker{}$ prefers $\firm{}$ over {\it all} other firms.
In this case, every matching containing the pair $(\worker{}', \firm{})$ is unstable.
In fact, any matching in which $\firm{}$ is paired with a worker that it prefers less than $\worker{}$ is unstable. 
To see why observe that $\firm{}$ could always (at least) propose to $\worker{}$ and such a proposal would certainly be accepted.
We refer to pairs like $(\worker{}', \firm{})$ as unattractive alternatives, and we note that the submarket with these pairs deleted must have the same set of stable matchings.
An important observation is as follows.
It is of course possible that some worker in a deleted pair has $\firm{}$ as their top choice.
But since being paired with $\firm{}$ is not feasible for this worker, they have a new top choice firm which provides a new guaranteed outside option for that firm.
Thus, deleting an unattractive pair, which rules out all matchings containing the pair, can generate unattractive alternatives that were not unattractive before and this can allow for further deletions.
In other words: iterate.\footnote{In Section 2.3 of \cite{GutinNeary:2023:GEB}, we argue that the IDUA procedure is the matching market analog of the {\it iterated deletion of dominated strategies} for strategic games. The parallels seem natural to us. A player's dominated strategy cannot be part of any solution because it is weak in that the player has an alternative strategy that is guaranteed to pay better no matter what. That is, a dominated strategy is always weak and therefore every strategy profile containing it can be removed from consideration. While a dominated strategy should never be played, its close-but-distant relative, a {\it dominant strategy}, must be chosen by a rational player since it is always the best response. That is, dominant strategies are strong relative to {\it all} strategy profiles. In a matching market a blocking pair is always strong only if both members of the pair rank each other first. Clearly, in both non-cooperative and cooperative environments, the notion of ``always strong'' is a more demanding requirement than ``always weak''.}

Once the IDUA procedure stops -- and it must -- what remains is the {\it normal form}:\ a (sub)matching market that contains all the essential information of the original market in that the set of stable matchings remains unchanged (see Lemma~\ref{lemma:iduaSameMatchings}).
Not only are smaller mathematical objects typically easier to analyse but, as we discuss in Section~\ref{sec:IDUA}, knowledge of the normal form strengthens the rural hospitals theorem free of charge since, in addition to the well-known insights listed in Footnote~\ref{fn:rural}, the normal form also reveals what worker-firms pairs are not part of any stable matching amongst those workers and firms that always feature.

While running IDUA generates the normal form of a matching market, the same procedure can also be used to explore the set of solutions and when they might accord with assignment constraints.
To see how, suppose that some worker $\worker{}$ and some single-position firm $\firm{}$ view each other as acceptable in that each prefers being paired together over being unmatched.
Suppose further that both participants appear in the normal form meaning that $\worker{}$ is employed and the position at $\firm{}$ is filled in every stable matching.
Finally, suppose that the pair $(\worker{}, \firm{})$ is part of the normal form and that the market designer would like to know if there is an equilibrium in which $\worker{}$ is not employed at $\firm{}$.\footnote{If the pair $(\worker{}, \firm{})$ is not in the normal form then there is no stable matching containing the pair.}
Such a requirement has consequences.
Not only can the pair $(\worker{}, \firm{})$ not be included, but it must also be the case that $(\worker{}, \firm{})$ is not a blocking pair against any candidate outcome.

The above hints at the difficulty.
One cannot simply discard all unwanted pairs since one or more of them may be blocking for a matching that satisfies the constraints.
However, it turns out that certain unwanted pairs can be ignored.
The key insight comes from combining Lemma~\ref{lemma:extremalMatches} and Lemma~\ref{lem:NotIn}.

Lemma~\ref{lemma:extremalMatches} notes that both the collection of worker optimal pairs and the collection of firm optimal pairs in the normal form comprise stable matchings.
In fact these are precisely the stable matchings found by running deferred acceptance.
Lemma~\ref{lem:NotIn} then observes that extreme pairs in the normal form can never be blocking in the normal form because, due to the well-known tension in preferences over stable matchings between the opposing sides of the market, if an extreme pair is optimal for one then it is the worst outcome for the other.
This has the consequence that a stable matching not including an extreme pair in the normal form is also a stable matching in the normal form with that pair removed (modulo there being a stable matching without that pair).
So, if an extreme pair is unwanted, then it can be deleted since the only stable matchings forfeited are those that include the extreme pair. 
Finally, the removal of an extreme pair creates a new extreme pair which affords IDUA the potential to reduce the market further since some participant has a new top choice.

While deleting unwanted extreme pairs and then running IDUA is permitted, there remains the question of what to do when no extreme pair is unwanted.
The answer is to select an extreme pair, any extreme pair will do, and ``duplicate'' the environment. 
In the first replica, the algorithm designates this extreme pair as unwanted and proceeds as described above.
In the other copy of the environment, the extreme pair is forced to be included.
If the extreme pair is given by $(\worker{}, \firm{})$, then all other pairs involving $\worker{}$ or containing $\firm{}$ are classified as unwanted and with these new unwanted pairs we proceed as before.
This repeated splitting generates a binary search tree that correctly returns all stable matchings satisfying the constraints (Theorem~\ref{thm:ALLmain}).

\subsection{Related work}

The literature to which our paper is closest falls under the umbrella term of  ``matchings with constraints'' (\cite{KamadaKojima:2017:AER} is a survey).
Constraints can be problematic as they may prove a barrier to stability.
What should a designer do when stability cannot be assured?
There is a growing literature that addresses this issue and proposes mechanisms that generate assignments deemed adequate in other ways.
We do not take a stance on what desiderata an unstable matching should satisfy for it to be deemed up to par.
We simply note that just because a stable matching satisfying constraints is not guaranteed to exist is not a guarantee that one does not. 
Viewed in this way, our algorithm can be used as a first check on whether a particular matching market with constraints requires that stability be relaxed.

To the best of our knowledge, the first paper to consider external constraints in matching markets is the \cite{Roth:1984:JPE} study of the NRMP referenced in Footnote \ref{fn:rural}.
Amongst other things, \citeauthor{Roth:1984:JPE} considered the consequences of two junior doctors who are a couple in their private life and, for example, desire to work at the same hospital above all else.
Roth showed that satisfying constraints of this type may be incompatible with stability.\footnote{\cite{KlausKlijn:2005:JET} and \cite{KlausKlijn:2009:JET} showed that existence is salvaged by restricting the preference domain. \cite{KojimaPathak:2013:QJE} showed that stable matchings with couples exist provided that there are relatively few couples and preference lists are sufficiently short.}
\cite{EhlersHalafir:2014:JET} and \cite{KamadaKojima:2015:AER} consider school choice problems with diversity constraints, and the Japanese medical residency market that has regional caps, respectively.

In theory, finding all stable matchings subject to any kind of constraints can be answered by brute force.
One begins with an algorithm that finds every stable matching \citep{IrvingLeather:1986:SIAMJoC,Gusfield:1987:SIAM,Dworczak:2021:OR, MartnezMasso:2004:MSS, BonifacioJuarez:2022:MSS} and then one compares each against the constraints.
However, the number of stable matchings can be exponential \citep{Knuth:1996:}, and so, despite the fact that each solution can be found in polynomial time \citep{Gusfield:1987:SIAM}, checking all of them may not be feasible in practice.
Our algorithm spends a polynomial amount of time per solution, and so appears of greatest value when the set of constraints is large in which case there are likely to be few solutions. 
In Appendix~\ref{app:exampleExponential}, we present a market with an exponential number of stable matchings but only four that satisfy the constraints.
We know of no other procedure that could identify these four stable matchings in a reasonable time frame.

While the literature on what a designer might or ought do in the presence of constraints is vast, we have not found much work that directly addresses how to find stable matchings satisfying constraints.
The following are notable exceptions.
\cite{GusfieldIrving:1989:} show how to determine the existence of stable matching with {\it forced pairs} (pairs that must be included).
\cite{DiasFonseca:2003:TCS} extend the query of \cite{GusfieldIrving:1989:} to one that finds all stable matchings with both forced pairs and {\it forbidden pairs} (pairs that must not be included).
The combination of forced pairs and forbidden pairs is similar, albeit less general, to our notion of assignment constraints in many-to-one markets.
In addition to considering regret-minimising stable matchings, \cite{MandalRoy:2021:wp} show how to output the two extreme stable matchings satisfying the constraints of \cite{DiasFonseca:2003:TCS}.
Finally, \cite{Garg2020} introduces families of constraints termed lattice-linear predicates.
An example of such a constraint is that one worker's regret is less than that of another.\footnote{The arXiv version, \cite{Garg2018}, developed a parallel algorithm for generating all stable matchings satisfying lattice-linear predicates. We observe that these constraints are quite different to the kind that we consider.}

In the absence of any constraints our algorithm outputs all stable matchings for the market in question.
Given this, it provides another avenue to compute the median stable matching \citep{TeoSethuraman:1998:MOR,Fleiner:2002:wp,SethuramanTeo:2006:MOR,KlausKlijn:2006:IJGT}), the generalised median stable matching  \citep{Cheng:2010:Algorithmica,ChenEgesdal:2016:GEB}, and quantile stable matchings \citep{ChenEgesdal:2021:Games}.\footnote{\cite{Cheng:2010:Algorithmica} shows that computing these objects is in general a computationally intractable problem ({\sf NP}-hard). However, \citeauthor{Cheng:2010:Algorithmica} also describes some families of instances for which the problem can be solved in polynomial time.}
This is useful as median stable matchings are viewed as ``fair'' in that they provide a compromise of sorts between the two sides of the market.\footnote{\cite{Echenique:2024:arXiv} present experimental evidence that the median stable matching is a likely outcome in decentralised two-sided matching markets.}



\section{Constrained matching problems}\label{sec:Environment}

In Section \ref{sec:constrainedProblem} we define matching problems with assignment constraints.
In Section \ref{subsec:reduce} we show how queries about assignments constraints in markets can be reduced to a seemingly simpler question concerning specific pairs.
In Section \ref{sec:digraphs} we show how to reformulate matching problems with assignment constraints using directed graphs, by extending this representation of matching problems, originally due to \cite{Maffray:1992:JCTB}, to include assignment constraints.

\subsection{Matching problems with assignment constraints}\label{sec:constrainedProblem}

The participants in a many-to-one two-sided matching market consist of two disjoint sets of \textit{workers} and \textit{firms}.\footnote{We emphasise that the terms ``workers'' and ``firms'' are merely placeholders. The matching framework of \cite{GaleShapley:1962:AMM} has been used to to analyse a wide range of environments. See \cite{Roth:2008:IJGT} for a survey.} Let $\Workers$ denote the set of $m$ workers and let $\Firms$ denote the set $n$ of firms. Each firm $\firm{} \in \Firms$ has a \textit{quota}, $\quota{\firm{}}$, that represents the maximum number of workers for which $\firm{}$ has positions. Each worker $\worker{} \in \Workers$ can hold at most one position.

Participants rank those on the other side of the market.
Specifically, each worker $\worker{} \in \Workers$ has a strict preference ordering over some nonempty list of firms in $\Firms$, and each firm $\firm{} \in \Firms$ has a strict preference ordering over some nonempty list of workers in $\Workers$.
If firm $\firm{}$ is on the preference list of worker $\worker{}$, we interpret this as $\worker{}$ would prefer to be paired with $\firm{}$ over being unmatched, and we say that $\firm{}$ is {\it acceptable} to $\worker{}$.
An analogous statement holds for the preference lists of firms.
A market participant that is not acceptable is said to be {\it unacceptable}.
We make the simplifying assumption that $\worker{}$ is acceptable to $\firm{}$ if and only if $\firm{}$ is acceptable to $\worker{}$, and we say that $(\worker{}, \firm{})$ is an {\it acceptable pair}.
Finally, we assume that each firm finds some worker acceptable and that each worker finds at least one firm acceptable.

\begin{definition}\label{def:instance}
An instance, $P$, of the {\it many-to-one matching problem} is a set of $m$ workers, $\Workers$, a set of $n$ firms, $\Firms$, and a collection of $m+n$ preference lists, one for each worker and one for each firm.
\end{definition}

A matching, $\match$, is a subset of acceptable pairs such that each worker $\worker{}$ appears at most once and each $\firm{}$ appears in at most $\quota{\firm{}}$ pairs in $\match$.

The following is the key definition of \cite{GaleShapley:1962:AMM}, modified for the environment of this paper.

\begin{definition}\label{def:blocking}
Given a many-to-one two-sided matching market, $P$, we say that worker $\worker{}$ and firm $\firm{}$ form a \textit{blocking pair} with respect to matching $\match$ in $P$, if both prefer each other over their current assignment in $\match$.
That is, both $\worker{}$ and $\firm{}$ find each other acceptable, $(\worker{}, \firm{})$ is not a matched pair in $\match$, and (i) $\worker{}$ is not in a matched pair in $\match$ and $\firm{}$ is in strictly less than $\quota{\firm{}}$ matched pairs in $\match$, or (ii) $\worker{}$ is not in a matched pair in $\match$, $\firm{}$ is in a matched pair with $\worker{}'$ and yet $\firm{}$ prefers $\worker{}$ to $\worker{}'$, or (iii) $\firm{}$ is in strictly less than $\quota{\firm{}}$ matched pairs in $\match$, and $\worker{}$ is in a matched pair with $\firm{}'$ and yet $\worker{}$ prefers $\firm{}$ to $\firm{}'$, or (iv) $\firm{}$ is in $\quota{\firm{}}$ matched pairs, $\worker{}$ is in a matched pair with $\firm{}'$ and there is a worker $\worker{}'$ in a matched pair with $\firm{}$ and both $\worker{}$ prefers $\firm{}$ to $\firm{}'$ and $\firm{}$ prefers $\worker{}$ to $\worker{}'$.
\end{definition}

The solution concept that \cite{GaleShapley:1962:AMM} proposed is termed {\it stability} and is defined by the absence of a blocking pair.

\begin{definition}
A matching $\match$ in $P$ with no blocking pairs is a \textit{stable matching}.
\end{definition}

\cite{GaleShapley:1962:AMM} proved that every instance of the matching problem possesses at least one stable matching.

As stated in {the introduction}, we envisage a hypothetical market designer who is faced with a particular matching market and has preferences over the set of matchings. 
The market designer's difficulty is that matchings that they desire may not be stable.
This tension can be explored by defining assignment constraints as follows.
(While we already provided Definition~\ref{def:constraints} below in the introduction, we repeat it here so as to make this section self-contained. We do likewise for both {\sc Constrained Stable Matchings} and {\sc Reduced Constrained Stable Matchings}.)

\begin{definition}\label{def:constraints}
For a given many-to-one matching problem (see Definition \ref{def:instance}) a {\it list of assignment constraints} specifies a subset of workers $\hat{\Workers} \subseteq \Workers$ and a subset of firms $\hat{\Firms} \subseteq \Firms$ such that
\begin{enumerate}[label=(\roman*)]
\item
for each worker $\worker{} \in \hat{\Workers}$, there is either a subset of firms, $\Firms^{in}(\worker{})$, one of which must be $\worker{}$'s employer, or there is a subset of firms, $\Firms^{out}(\worker{})$, at which $\worker{}$ must not be employed, or there are both.

\item
for each firm $\firm{} \in \hat{\Firms}$, there is either a subset of workers, $\Workers^{in}(\firm{})$, such that the positions at $\firm{}$ must be filled by workers from the subset, or there is a subset of workers $\Workers^{out}(\firm{})$, each of whom must not be employed at $\firm{}$, or there are both.
\end{enumerate}
\end{definition}

For a given list of assignment constraints, the market designer would like to know whether there is a stable matching consistent with the list.
If yes, the market designer would then like to know all of them.
Therefore, the market designer faces the following problem that is the main motivation of our paper.

\2

{\sc Constrained Stable Matchings}:\ {\em Given a many-to-one two-sided market and a list of assignment constraints, is there is a stable matching that satisfies the assignment constraints?
If yes, generate all such stable matchings.
} 

\2

We now provide an example of a many-to-one matching problem with assignment constraints. This will be our working example throughout.
We emphasise two features of this example to indicate that our framework is as general as possible. First, the market is unbalanced (there are more workers than positions), and second, preference lists are not complete (some pairs are unacceptable).

\begin{example}\label{ex:initial}
Consider a many-to-one two-sided matching problem with six workers, $\Workers = \set{\worker{1}, \worker{2}, \worker{3}, \worker{4}, \worker{5}, \worker{6}}$, and four firms, $\Firms = \set{\firm{1}, \firm{2}, \firm{3}, \firm{4}}$. The firms' quotas are given by $\quota{\firm{1}} = \quota{\firm{2}} = \quota{\firm{3}} = 1$ and $\quota{\firm{4}} = 2$.
The preference lists for each worker and each firm are presented below, with the acceptable participants on the other side of the market listed in order of decreasing preference.
\begin{align*}
&\worker{1}: \, \firm{1} \, , \firm{2} \, , \firm{3}, \,  \firm{4}  \hspace{.4in} & \firm{1}: & \,\, \worker{5} \, , \worker{4} \, , \worker{3} \, , \worker{2} \, , \worker{1} \, , \worker{6}			\\
&\worker{2}: \, \firm{2} \, , \firm{1} \, , \firm{4} \, , \firm{3}  \hspace{.4in} & \firm{2}: & \,\, \worker{3} \, , \worker{5} \, , \worker{4} \, , \worker{1} \, , \worker{2} \, , \worker{6}		\\
&\worker{3}: \, \firm{3} \, , \firm{4} \, , \firm{1} \, , \firm{2}  \hspace{.4in} & \firm{3}: &	\,\, \worker{2} \, , \worker{1} \, , \worker{5} \, , \worker{4} \, , \worker{3} 		\\
&\worker{4}: \, \firm{4} \, , \firm{3} \, , \firm{2} \, , \firm{1}  \hspace{.4in} & \firm{4}: & \,\, \worker{5} \, , \worker{1} \, , \worker{2} \, , \worker{3} \, , \worker{4}  \, , \worker{6}
	\\
&\worker{5}: \, \firm{4} \, , \firm{1} \, , \firm{2} \, , \firm{1}  \hspace{.4in} & &
	\\
&\worker{6}: \, \firm{2} \, , \firm{1} \, , \firm{4} \,  \hspace{.4in} & &
\end{align*}

Now return to the hypothetical market designer. Suppose that, for whatever reason, the market designer wants to know if there is a stable matching in which either worker $\worker{1}$ or worker $\worker{6}$ is employed at firm $\firm{2}$, worker $\worker{4}$ is not employed at firm $\firm{1}$, and worker $\worker{6}$ is not employed at firm $\firm{4}$? In the language of Definition \ref{def:constraints}, the market designer would like an answer to the following question:

\begin{question}\label{q:QuestionGeneral}
Given the stable matching problem defined above, is there a stable matching consistent with the list of assignment constraints:\ $\Workers^{out}(\firm{1}) = \set{\worker{4}}$, $\Workers^{in}(\firm{2}) = \set{\worker{1}, \worker{6}}$, and $\Workers^{out}(\firm{4}) = \set{\worker{6}}$?
\end{question}

\end{example}

In the next subsection we show how to reduce {\sc Constrained Stable Matchings} into a far simpler but equivalent formulation.

\subsection{Reducing {\sc Constrained Stable Matchings}}\label{subsec:reduce}

The procedure for reducing {\sc Constrained Stable Matchings} has two main components.

The first step is to run the {\it Iterated Deletion of Unattractive Alternatives} (IDUA) procedure on the equivalent one-to-one environment without constraints.\footnote{The word ``equivalent'' here means that there is a bijection between the sets of stable matchings for the two environments. We quickly sketch how this reduction is carried out towards the end of this section but we note that it is a standard procedure, e.g., \cite{GaleSotomayor:1985:DAM}.}
The reason for reducing to a one-to-one environment is that the IDUA procedure is defined only for one-to-one environments, and the IDUA procedure is the key tool in our algorithm.
Running IDUA on the one-to-one variant reduces the unconstrained environment to its \textit{normal form} -- {a matching problem that contains the same set of stable matchings, and yet is no larger than the original problem.

Step two involves checking how the constraints stack up against the normal form instead of the market as a whole.
Given that the normal form is a (weakly) smaller object, often the list of constraints can be shrunk since some will be rendered redundant.
On occasion, it will be clear that a list of assignment constraints leave {\sc Constrained Stable Matchings} infeasible.

The normal form of a matching problem is always a balanced market (Lemma~\ref{lemma:lem1}) because workers and positions that do not appear in any stable matching have been discarded by IDUA.
As such, assignment constraints that require that a particular worker be employed at one of a certain number of firms is clearly not satisfiable if that worker is never employed in any stable matching.
Similarly, a constraint requiring that the worker not be employed at some subset of firms is rendered redundant if that worker does not appear in the normal form.

The original market of Example \ref{ex:initial} was unbalanced with more workers than positions and so something has to give.
That is, there must be at least one worker (and possibly some position) that is unmatched in every stable matching.
Indeed, we will see that worker $\worker{6}$ is not employed in any stable matching because $\worker{6}$ does not appear in the normal form.
Since $\worker{6}$ appears in the constraint $\Workers^{in}(\firm{2}) = \set{\worker{1}, \worker{6}}$ relevant to firm $\firm{2}$, the constraint is still valid, albeit we reduce from $\Workers^{in}(\firm{2}) = \set{\worker{1}, \worker{6}}$ to $\Workers^{in}(\firm{2}) = \set{\worker{1}}$.
If instead that the market designer's Question \ref{q:QuestionGeneral} had stipulated that $\Firms^{in}(\worker{6})$ was nonempty, then clearly the assignment constraints cannot be satisfied because $\worker{6}$ is never employed in any stable matching.

Once the constraints in Definition \ref{def:constraints} have been reduced as above, we can reformulate them as constraints on worker-firm pairs as follows.

Constraints given by $\Workers^{out}(\firm{})$ and $\Firms^{out}(\worker{})$ are easily handled as they immediately specify pairs that cannot be included in the desired output.

Suppose that worker $\worker{}$ and a position at firm $\firm{}$ both appear in the normal form, meaning that $\worker{}$ is always employed and the position at $\firm{}$ is filled in every stable matching.
Then a constraint of the form $\Firms^{in}(\worker{}) = \set{\firm{}}$ states that the pair $(\worker{}, \firm{})$ must be included in any desired stable matching.

More care must be taken with ``in'' constraints whenever they contain two or more elements.
To see this suppose that $\worker{}$ and a position at firm $\firm{}'$ and a position at firm $\firm{}''$ are all included in the normal form, and consider the constraint $\Firms^{in}(\worker{}) = \set{\firm{}', \firm{}''}$ that stipulates that $\worker{}$ must be employed either at $\firm{}'$ or $\firm{}''$.
By definition $\worker{}$ cannot be simultaneously employed at both firms, and so we cannot simply force the matching to include the pairs $(\worker{}, \firm{}')$ and $(\worker{}, \firm{}'')$.
However, we can instead ask if there is a stable matching at which $\worker{}$ is not employed at any firm other than $\firm{}'$ or $\firm{}''$.
We can do this because $\worker{}$ is employed in every stable matching, and so insisting that $\worker{}$ is not employed at any firm other than $\firm{}'$ or $\firm{}''$ is equivalent to requiring that $\worker{}$ is employed at either $\firm{}'$ or $\firm{}''$.
The constraint $\Firms^{in}(\worker{}) = \set{\firm{}', \firm{}''}$ is therefore equivalent to $\Firms^{out}(\worker{}) = \set{\firm{} : \firm{} \neq \firm{}', \firm{}''}$.
So whenever an ``in'' constraint contains two or more elements, we convert the constraint into the corresponding ``out'' constraint.

We have then shown how to reduce {\sc Constrained Stable Matchings} to a simpler problem, {\sc Reduced Constrained Stable Matchings}, defined as follows. 

\2

{\sc Reduced Constrained Stable Matchings}: {\em Given two disjoint collections of worker-firm pairs $C^{in}$ and $C^{out}$, is there is a stable matching that includes all worker-firm pairs in $C^{in}$ and does not include any worker-firm pair in $C^{out}$?
If yes, generate all such stable matchings.}

\2

We note that the reduction in constraints above requires that each worker and each firm appears at most once in the collection of pairs $C^{in}$.

Let us now recall how to reduce a many-to-one environment to a one-to-one environment.
First we ``split'' each firm $\firm{}$ with $\quota{\firm{}} \geq 2$ (i.e., those firms with two or more positions) into $\quota{\firm{}}$ identical copies, $\firm{}^1, \dots, \firm{}^{\quota{\firm{}}}$, such that each copy has the same preferences over workers as $\firm{}$.
Next, we amend the preferences of workers such that the $\quota{\firm{}}$ copies of firm $\firm{}$ appear consecutively and in the same order order in the preference list of every worker who views firm $\firm{}$ as acceptable.
Doing this yields a new one-to-one matching problem where $\Workers$ is a set of $m$ workers, and $\Firms$ is a set of $\sum_{i=1}^{n}q(\firm{i})$ firms, $\set{\firm{1}^{(1)}, \dots, \firm{1}^{(\quota{\firm{1}})}, \dots, \firm{n}^{(1)}, \dots, \firm{n}^{(\quota{\firm{n}})}}$, with preferences inherited as described.

We now convert the many-to-one matching problem of Example \ref{ex:initial} into its equivalent one-to-one formulation.
This will be our working example going forward.

\begin{exampleR}\label{exR:initial}
Having split firm $\firm{4}$ into two separate positions, $\firm{4}^{1}$ and $\firm{4}^{2}$, and assumed that all workers strictly prefer $\firm{4}^{1}$ to $\firm{4}^{2}$, the preference lists for each worker and position in the corresponding one-to-one matching problem are as presented below.
As before, participants on the other side of the market are listed in order of decreasing preference.
\begin{align*}
&\worker{1}: \, \firm{1} \, , \firm{2} \, , \firm{3} \, , \firm{4}^{1} \, , \firm{4}^{2}  \hspace{.4in} & \firm{1}: & \,\, \worker{5} \, , \worker{4} \, , \worker{3} \, , \worker{2} \, , \worker{1} \, , \worker{6}			\\
&\worker{2}: \, \firm{2} \, , \firm{1} \, , \firm{4}^{1} \, , \firm{4}^{2} \, , \firm{3}  \hspace{.4in} & \firm{2}: & \,\, \worker{3} \, , \worker{5} \, , \worker{4} \, , \worker{1} \, , \worker{2} \, , \worker{6}		\\
&\worker{3}: \, \firm{3} \, , \firm{4}^{1} \, , \firm{4}^{2} \, , \firm{1} \, , \firm{2}  \hspace{.4in} & \firm{3}: &	\,\, \worker{2} \, , \worker{1} \, , \worker{5} \, , \worker{4} \, , \worker{3} 		\\
&\worker{4}: \, \firm{4}^{1} \, , \firm{4}^{2} \, , \firm{3} \, , \firm{2} \, , \firm{1}  \hspace{.4in} & \firm{4}^{1}: & \,\, \worker{5} \, , \worker{1} \, , \worker{2} \, , \worker{3} \, , \worker{4}  \, , \worker{6}
	\\
&\worker{5}: \, \firm{4}^{1} \, , \firm{4}^{2} \, , \firm{1} \, , \firm{2} \, , \firm{1}  \hspace{.4in} & \firm{4}^{2}: & \,\, \worker{5} \, , \worker{1} \, , \worker{2} \, , \worker{3} \, , \worker{4}  \, , \worker{6}
	\\
&\worker{6}: \, \firm{2} \, , \firm{1} \, , \firm{4}^{1} \, , \firm{4}^{2}  \hspace{.4in} & &
\end{align*}

The market designer's Question~\ref{q:QuestionGeneral} can now be expressed more simply as:
\begin{questionR}\label{q:QuestionR}
Is there a stable matching that includes the pair $(\worker{1}, \firm{2})$ or the pair $(\worker{6}, \firm{2})$ and does not include the pair $(\worker{4}, \firm{1})$ nor the pair $(\worker{6}, \firm{4}^{1})$ nor the pair $(\worker{6}, \firm{4}^{2})$?
\end{questionR}

We have determined that $\worker{6}$ is never employed in any stable matching and that all other workers are always employed and all positions are always filled. Therefore, applying the reductions described above the question can be reformulated as querying whether there is a stable matching in which:
\[
C^{in} = \set{(\worker{1}, \firm{2})} \, \text{ and } \, C^{out}= \set{(\worker{4}, \firm{1})}
\]
\end{exampleR}

As mentioned before, our first port of call will be to reduce a matching problem to its normal form using the IDUA procedure.
Before formally introducing the IDUA procedure, that we postpone until Section \ref{sec:IDUA}, we show how to represent a matching problem with assignment constraints as a directed graph.


\subsection{Matching problems with assignment constraints as digraphs\protect\footnote{The graph-theoretic terminology and notation that we employ is identical to that in \cite{GutinNeary:2023:GEB}. For the reader's convenience this notation is recapped in Appendix \ref{sec:notation}.}}\label{sec:digraphs}

Let $P$ be an instance of the one-to-one two-sided matching market where $\Workers$ denotes the set of $m$ workers and $\Firms$ denotes the set of $n$ firms.
(We assume that the reduction from many-to-one to one-to-one has occurred, and so it is possible that some firms are copies of each other.) Given $P$, we define the associated {\em matching digraph}, $D(P) = (V, A)$, where $V$ is the vertex set and $A$ is the arc set. The vertex set $V$ is defined as $V \subseteq \Workers \times \Firms$, where $v = (\worker{}, \firm{}) \in V$ if and only if $(\worker{}, \firm{})$ is an acceptable pair.
The arc set $A$ is defined as follows.

\begin{center}
$\begin{array}{rcl}
  A_{\Workers} & := & \{(\worker{}, \firm{i}) (\worker{}, \firm{j}) \; | \; \worker{} \text{ prefers } \firm{j} \text{ to } \firm{i} \} \\
  A_{\Firms}  & := &  \{ (\worker{i}, \firm{}) (\worker{j}, \firm{}) \; | \; \firm{} \text{ prefers } \worker{j} \text{ to } \worker{i} \} \\
  A & := & A_{\Workers} \cup A_{\Firms} \\
\end{array}$
\end{center}

We now show how to depict the matching problem with constraints of Reduced Example~\ref{exR:initial} using digraphs.
Since there are six workers and five positions, the matching digraph is a $6 \times 5$ grid with rows indexed by workers ($\worker{1}$ through $\worker{6}$) and columns indexed by positions ($\firm{1}$ through $\firm{4}^{2}$).
Preferences are depicted by arcs, horizontal for workers and vertical for firms, that point from a pair to a preferred one.
The matching digraph is depicted in Figure~\ref{fig:ex1}, where we note that arcs implied by transitivity have been suppressed for readability.

\begin{figure}[h!]
\begin{center}
\tikzstyle{vertexDOT}=[scale=0.23,circle,draw,fill]
\tikzstyle{vertexY}=[circle,draw, top color=gray!10, bottom color=gray!40, minimum size=11pt, scale=0.5, inner sep=0.99pt]

\WithColor{\tikzstyle{vertexIn}=[circle,draw, top color=green!20, bottom color=green!60, minimum size=11pt, scale=0.5, inner sep=0.99pt]}
\WithoutColor{\tikzstyle{vertexIn}=[circle,draw, top color=gray!1, bottom color=gray!1, minimum size=21pt, scale=0.5, inner sep=2.99pt]}

\WithColor{\tikzstyle{vertexOut}=[circle,draw, top color=red!20, bottom color=red!60, minimum size=11pt, scale=0.5, inner sep=0.99pt]}
\WithoutColor{\tikzstyle{vertexOut}=[circle,draw, top color=gray!99, bottom color=gray!99, minimum size=21pt, scale=0.5, inner sep=2.99pt]}

\begin{tikzpicture}[scale=0.8]
\node [scale=0.9] at (1,12.5) {$f_1$};
\node [scale=0.9] at (3,12.5) {$f_2$};
\node [scale=0.9] at (5,12.5) {$f_3$};
\node [scale=0.9] at (7,12.5) {$f_4^1$};
\node [scale=0.9] at (9,12.5) {$f_4^2$};

\node [scale=0.9] at (-1,11) {$w_1$};
\node (x11) at (1,11) [vertexY] {$(1,1)$};
\node (x12) at (3,11) [vertexIn] {$(1,2)$};
\node (x13) at (5,11) [vertexY] {$(1,3)$};
\node (x14) at (7,11) [vertexY] {$(1,4)$};
\node (x15) at (9,11) [vertexY] {$(1,5)$};

\node [scale=0.9] at (-1,9) {$w_2$};
\node (x21) at (1,9) [vertexY] {$(2,1)$};
\node (x22) at (3,9) [vertexY] {$(2,2)$};
\node (x23) at (5,9) [vertexY] {$(2,3)$};
\node (x24) at (7,9) [vertexY] {$(2,4)$};
\node (x25) at (9,9) [vertexY] {$(2,5)$};

\node [scale=0.9] at (-1,7) {$w_3$};
\node (x31) at (1,7) [vertexY] {$(3,1)$};
\node (x32) at (3,7) [vertexY] {$(3,2)$};
\node (x33) at (5,7) [vertexY] {$(3,3)$};
\node (x34) at (7,7) [vertexY] {$(3,4)$};
\node (x35) at (9,7) [vertexY] {$(3,5)$};

\node [scale=0.9] at (-1,5) {$w_4$};
\node (x41) at (1,5) [vertexOut] {$(4,1)$};
\node (x42) at (3,5) [vertexY] {$(4,2)$};
\node (x43) at (5,5) [vertexY] {$(4,3)$};
\node (x44) at (7,5) [vertexY] {$(4,4)$};
\node (x45) at (9,5) [vertexY] {$(4,5)$};

\node [scale=0.9] at (-1,3) {$w_5$};
\node (x51) at (1,3) [vertexY] {$(5,1)$};
\node (x52) at (3,3) [vertexY] {$(5,2)$};
\node (x53) at (5,3) [vertexY] {$(5,3)$};
\node (x54) at (7,3) [vertexY] {$(5,4)$};
\node (x55) at (9,3) [vertexY] {$(5,5)$};

\node [scale=0.9] at (-1,1) {$w_6$};
\node (x61) at (1,1) [vertexY] {$(6,1)$};
\node (x62) at (3,1) [vertexY] {$(6,2)$};
\node (x64) at (7,1) [vertexOut] {$(6,4)$};
\node (x65) at (9,1) [vertexOut] {$(6,5)$};

\draw [->,line width=0.03cm] (x15) -- (x14);
\draw [->,line width=0.03cm] (x14) -- (x13);
\draw [->,line width=0.03cm] (x13) -- (x12);
\draw [->,line width=0.03cm] (x12) -- (x11);

\draw [->,line width=0.03cm] (x23) to [out=20,in=160] (x25);
\draw [->,line width=0.03cm] (x25) -- (x24);
\draw [->,line width=0.03cm] (x24) to [out=200,in=340] (x21);
\draw [->,line width=0.03cm] (x21) -- (x22);

\draw [->,line width=0.03cm] (x32) -- (x31);
\draw [->,line width=0.03cm] (x31) to [out=20,in=160] (x35);
\draw [->,line width=0.03cm] (x35) -- (x34);
\draw [->,line width=0.03cm] (x34) -- (x33);

\draw [->,line width=0.03cm] (x41) -- (x42);
\draw [->,line width=0.03cm] (x42) -- (x43);
\draw [->,line width=0.03cm] (x43) to [out=20,in=160] (x45);
\draw [->,line width=0.03cm] (x45) -- (x44);

\draw [->,line width=0.03cm] (x53) -- (x52);
\draw [->,line width=0.03cm] (x52) -- (x51);
\draw [->,line width=0.03cm] (x51) to [out=20,in=160] (x55);
\draw [->,line width=0.03cm] (x55) -- (x54);

\draw [->,line width=0.03cm] (x65) -- (x64);
\draw [->,line width=0.03cm] (x64) to [out=200,in=340] (x61);
  \draw [->,line width=0.03cm] (x61) -- (x62);

\draw [->,line width=0.03cm] (x61) to [out=70,in=290] (x11);
\draw [->,line width=0.03cm] (x11) -- (x21);
\draw [->,line width=0.03cm] (x21) -- (x31);
\draw [->,line width=0.03cm] (x31) -- (x41);
\draw [->,line width=0.03cm] (x41) -- (x51);

\draw [->,line width=0.03cm] (x62) to [out=70,in=290] (x22);
\draw [->,line width=0.03cm] (x22) -- (x12);
\draw [->,line width=0.03cm] (x12) to [out=250,in=110] (x42);
\draw [->,line width=0.03cm] (x42) -- (x52);
\draw [->,line width=0.03cm] (x52) to [out=70,in=290] (x32);

\draw [->,line width=0.03cm] (x33) -- (x43);
\draw [->,line width=0.03cm] (x43) -- (x53);
\draw [->,line width=0.03cm] (x53) to [out=70,in=290] (x13);
\draw [->,line width=0.03cm] (x13) -- (x23);

\draw [->,line width=0.03cm] (x64) to [out=70,in=290] (x44);
\draw [->,line width=0.03cm] (x44) -- (x34);
\draw [->,line width=0.03cm] (x34) -- (x24);
\draw [->,line width=0.03cm] (x24) -- (x14);
\draw [->,line width=0.03cm] (x14) to [out=250,in=110] (x54);

\draw [->,line width=0.03cm] (x65) to [out=70,in=290] (x45);
\draw [->,line width=0.03cm] (x45) -- (x35);
\draw [->,line width=0.03cm] (x35) -- (x25);
\draw [->,line width=0.03cm] (x25) -- (x15);
\draw [->,line width=0.03cm] (x15) to [out=250,in=110] (x55);

\end{tikzpicture}
\end{center}
\caption{The matching digraph representation of Reduced Example \ref{exR:initial}, where arcs implied by transitivity have been suppressed for readability. Assignment constraints are depicted by colour coding vertices.}

\label{fig:ex1}
\end{figure}

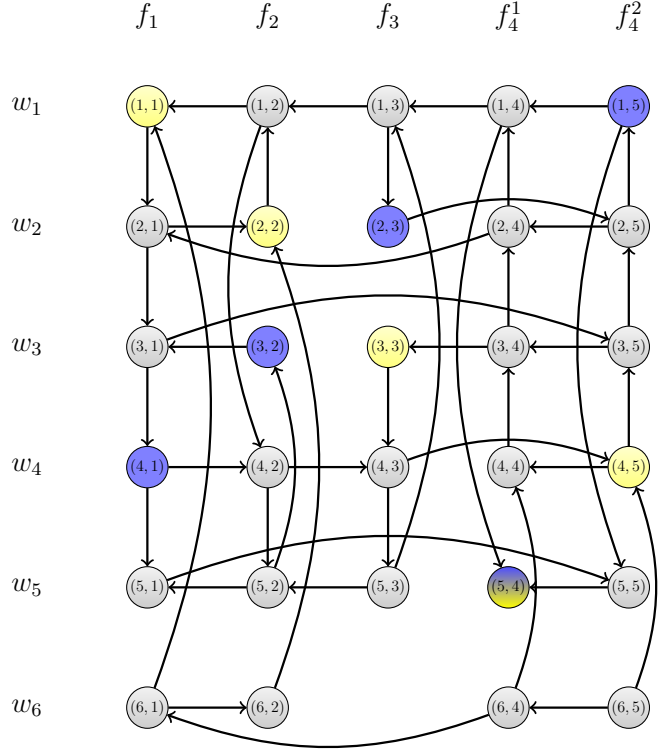
\begin{figure}[h!]
\begin{center}
\tikzstyle{vertexDOT}=[scale=0.23,circle,draw,fill]
\tikzstyle{vertexY}=[circle,draw, top color=gray!10, bottom color=gray!40, minimum size=11pt, scale=0.5, inner sep=0.99pt]

\WithColor{\tikzstyle{vertexZ}=[circle,draw, top color=\matchingC{}!20, bottom color=\matchingC{}!50, minimum size=11pt, scale=0.5, inner sep=0.99pt]} 
\tikzstyle{vertexWF}=[circle,draw, top color=blue!70, bottom color=yellow!100, minimum size=11pt, scale=0.5, inner sep=0.99pt]
\tikzstyle{vertexF}=[circle,draw, top color=blue!50, bottom color=blue!50, minimum size=11pt, scale=0.5, inner sep=0.99pt]
\WithoutColor{\tikzstyle{vertexZ}=[rectangle,draw, top color=gray!1, bottom color=gray!1, minimum size=21pt, scale=0.6, inner sep=2.99pt]}

\tikzstyle{vertexW}=[circle,draw, top color=gray!1, bottom color=gray!1, minimum size=11pt, scale=0.4, inner sep=0.99pt]
\tikzstyle{vertexQ}=[circle,draw, top color=black!99, bottom color=black!99, minimum size=11pt, scale=0.4, inner sep=0.99pt]

\begin{tikzpicture}[scale=0.8]
\node [scale=0.9] at (1,12.5) {$f_1$};
\node [scale=0.9] at (3,12.5) {$f_2$};
\node [scale=0.9] at (5,12.5) {$f_3$};
\node [scale=0.9] at (7,12.5) {$f_4^1$};
\node [scale=0.9] at (9,12.5) {$f_4^2$};

\node [scale=0.9] at (-1,11) {$w_1$};
\node (x11) at (1,11) [vertexZ] {$(1,1)$};
\node (x12) at (3,11) [vertexY] {$(1,2)$};
\node (x13) at (5,11) [vertexY] {$(1,3)$};
\node (x14) at (7,11) [vertexY] {$(1,4)$};
\node (x15) at (9,11) [vertexF] {$(1,5)$};

\node [scale=0.9] at (-1,9) {$w_2$};
\node (x21) at (1,9) [vertexY] {$(2,1)$};
\node (x22) at (3,9) [vertexZ] {$(2,2)$};
\node (x23) at (5,9) [vertexF] {$(2,3)$};
\node (x24) at (7,9) [vertexY] {$(2,4)$};
\node (x25) at (9,9) [vertexY] {$(2,5)$};

\node [scale=0.9] at (-1,7) {$w_3$};
\node (x31) at (1,7) [vertexY] {$(3,1)$};
\node (x32) at (3,7) [vertexF] {$(3,2)$};
\node (x33) at (5,7) [vertexZ] {$(3,3)$};
\node (x34) at (7,7) [vertexY] {$(3,4)$};
\node (x35) at (9,7) [vertexY] {$(3,5)$};

\node [scale=0.9] at (-1,5) {$w_4$};
\node (x41) at (1,5) [vertexF] {$(4,1)$};
\node (x42) at (3,5) [vertexY] {$(4,2)$};
\node (x43) at (5,5) [vertexY] {$(4,3)$};
\node (x44) at (7,5) [vertexY] {$(4,4)$};
\node (x45) at (9,5) [vertexZ] {$(4,5)$};

\node [scale=0.9] at (-1,3) {$w_5$};
\node (x51) at (1,3) [vertexY] {$(5,1)$};
\node (x52) at (3,3) [vertexY] {$(5,2)$};
\node (x53) at (5,3) [vertexY] {$(5,3)$};
\node (x54) at (7,3) [vertexWF] {$(5,4)$};
\node (x55) at (9,3) [vertexY] {$(5,5)$};

\node [scale=0.9] at (-1,1) {$w_6$};
\node (x61) at (1,1) [vertexY] {$(6,1)$};
\node (x62) at (3,1) [vertexY] {$(6,2)$};
\node (x64) at (7,1) [vertexY] {$(6,4)$};
\node (x65) at (9,1) [vertexY] {$(6,5)$};

\draw [->,line width=0.03cm] (x15) -- (x14);
\draw [->,line width=0.03cm] (x14) -- (x13);
\draw [->,line width=0.03cm] (x13) -- (x12);
\draw [->,line width=0.03cm] (x12) -- (x11);

\draw [->,line width=0.03cm] (x23) to [out=20,in=160] (x25);
\draw [->,line width=0.03cm] (x25) -- (x24);
\draw [->,line width=0.03cm] (x24) to [out=200,in=340] (x21);
\draw [->,line width=0.03cm] (x21) -- (x22);

\draw [->,line width=0.03cm] (x32) -- (x31);
\draw [->,line width=0.03cm] (x31) to [out=20,in=160] (x35);
\draw [->,line width=0.03cm] (x35) -- (x34);
\draw [->,line width=0.03cm] (x34) -- (x33);

\draw [->,line width=0.03cm] (x41) -- (x42);
\draw [->,line width=0.03cm] (x42) -- (x43);
\draw [->,line width=0.03cm] (x43) to [out=20,in=160] (x45);
\draw [->,line width=0.03cm] (x45) -- (x44);

\draw [->,line width=0.03cm] (x53) -- (x52);
\draw [->,line width=0.03cm] (x52) -- (x51);
\draw [->,line width=0.03cm] (x51) to [out=20,in=160] (x55);
\draw [->,line width=0.03cm] (x55) -- (x54);

\draw [->,line width=0.03cm] (x65) -- (x64);
\draw [->,line width=0.03cm] (x64) to [out=200,in=340] (x61);
  \draw [->,line width=0.03cm] (x61) -- (x62);

\draw [->,line width=0.03cm] (x61) to [out=70,in=290] (x11);
\draw [->,line width=0.03cm] (x11) -- (x21);
\draw [->,line width=0.03cm] (x21) -- (x31);
\draw [->,line width=0.03cm] (x31) -- (x41);
\draw [->,line width=0.03cm] (x41) -- (x51);

\draw [->,line width=0.03cm] (x62) to [out=70,in=290] (x22);
\draw [->,line width=0.03cm] (x22) -- (x12);
\draw [->,line width=0.03cm] (x12) to [out=250,in=110] (x42);
\draw [->,line width=0.03cm] (x42) -- (x52);
\draw [->,line width=0.03cm] (x52) to [out=70,in=290] (x32);

\draw [->,line width=0.03cm] (x33) -- (x43);
\draw [->,line width=0.03cm] (x43) -- (x53);
\draw [->,line width=0.03cm] (x53) to [out=70,in=290] (x13);
\draw [->,line width=0.03cm] (x13) -- (x23);

\draw [->,line width=0.03cm] (x64) to [out=70,in=290] (x44);
\draw [->,line width=0.03cm] (x44) -- (x34);
\draw [->,line width=0.03cm] (x34) -- (x24);
\draw [->,line width=0.03cm] (x24) -- (x14);
\draw [->,line width=0.03cm] (x14) to [out=250,in=110] (x54);

\draw [->,line width=0.03cm] (x65) to [out=70,in=290] (x45);
\draw [->,line width=0.03cm] (x45) -- (x35);
\draw [->,line width=0.03cm] (x35) -- (x25);
\draw [->,line width=0.03cm] (x25) -- (x15);
\draw [->,line width=0.03cm] (x15) to [out=250,in=110] (x55);

\end{tikzpicture}
\end{center}
\caption{The coloured vertices denote the two extremal stable matchings.}

\label{fig:exExtremeMatchings}
\end{figure}

A {\it matching}, $M$, in $D(P)$ is an independent set of vertices.
Due to the grid-like structure of $D(P)$, independence forbids that any two vertices in $M$ are in the same row or the same column.
A {\it stable matching} in $D(P)$ is a matching such that every vertex in the digraph is either in the matching or has an out-neighbour in the matching. Such an object is known as a {\it kernel}.\footnote{While kernels are by now well-studied objects in graph theory, they were first introduced by \cite{NeumannMorgenstern:1944:} in studying solutions to cooperative games.}

The assignment constraints stipulated by the market designer's {Reduced Question} \ref{q:QuestionR} are depicted by colour coding the vertices in Figure~\ref{fig:ex1}.
The requirement that the position at firm $\firm{2}$ is filled either with worker $\worker{1}$ or worker $\worker{6}$ is conveyed by colouring the vertex $(1, 2)$ in {green}.
This is because worker $\worker{6}$ is not employed at any stable matching, implying that $\worker{1}$ must be employed at $\firm{2}$.
Similarly, the three vertices coloured {red}, $(4, 1), (6, 4)$ and $(6, 5)$, represent those pairs that are forbidden by the market designer, where we note that both $(6,4)$ and $(6,5)$ could be left uncoloured as the entire bottom row (corresponding to $\worker{6}$) is irrelevant since $\worker{6}$ will never be employed in any stable matching.
The remaining vertices shaded grey are those over which the market designer has no opinion.

Figure~\ref{fig:exExtremeMatchings} shows the two extremal stable matchings for this example.
Some vertices are coloured {yellow}, others are coloured {blue}, and vertex $(5,4)$ is a hybrid of yellow and blue. The yellow vertices together with the hybrid vertex make up the worker-optimal stable matching;
the blue vertices together with the hybrid vertex make up the firm-optimal stable matching.
Each of these stable matchings is found by running the deferred acceptance algorithm of \cite{GaleShapley:1962:AMM} with the relevant group in the role of proposer.
We highlight that neither of these stable matchings satisfy the assignment constraints in {Reduced Question} \ref{q:QuestionR} since neither contains the required pair $(\worker{1}, \firm{2})$.
That is, neither contain the {green} vertex of Figure~\ref{fig:ex1}. Finally we note that there are no yellow vertices nor blue vertices in the row corresponding to $\worker{6}$.
This means that $\worker{6}$ does not appear in any stable matching.

Related to the above, we note that the reduced one-to-one two-sided market may be an unbalanced one (e.g., in our example there are more workers than positions).
In the next section we present the IDUA procedure.
One upshot of running IDUA is that it will always convert the unbalanced market into a balanced market such that only those participants that are matched in every stable matching remain.
Moreover, for those participants who are matched in every stable matching, the normal form will reveal some pairings that cannot be part of any stable matching.
This is a strengthening the rural hospitals theorem.


\section{The iterated deletion of unattractive alternatives (IDUA)}\label{sec:IDUA} 

The first step of our algorithm is to run the \textit{Iterated Deletion of Unattractive Alternatives} (IDUA) procedure.
Since our primary focus is on how IDUA assists in addressing matching problems with assignment constraints we will be brief in describing how it arrives at the normal form.
For how the procedure works in general, the interested reader may wish to consult \cite{GutinNeary:2023:GEB}. In Section 2 the procedure and its behavioural interpretation are discussed and in Figure 3 in Section 4.2 an entire run of the IDUA procedure is spelled out.\footnote{In \cite{GutinNeary:2023:GEB}, IDUA is applied to the original one-to-one environment of \cite{GaleShapley:1962:AMM} and is first defined in terms of preferences rather than digraphs. For more general environments, we believe it is easier to apply IDUA by first converting to the digraph formulation.}

IDUA starts out by observing that some pairs in a matching problem are not a barrier to stability.
To see how, suppose that worker $\worker{}$ is the favourite worker of firm $\firm{}$.
Then there cannot be a stable matching in which $\worker{}$ is paired with a firm that $\worker{}$ prefers less than $\firm{}$.
The reason is that $(\worker{}, \firm{})$ would be a blocking pair against any such matching, since $\worker{}$ would propose a joint deviation and $\firm{}$ would certainly accept.
Therefore all firms below $\firm{}$ in $\worker{}$'s preference list can be removed from consideration by $\worker{}$. 
Such firms are {\it unattractive} to $\worker{}$.
The firms that $\worker{}$ deems unattractive reciprocate by removing $\worker{}$ from their preference lists.

The following deletes unattractive pairs using the language of digraphs.

\begin{definition}\label{def:reduce}
Given a matching digraph $D(P)$, we define $\reduce\big(D(P)\big)$ as the subdigraph of $D(P)$ with the following vertices (and any arcs incident on them) deleted.
\begin{enumerate}[label=(\roman*)]
\item\label{item:reduce1}
If no arc in $A_{\Firms}$ leaves $(\worker{}, \firm{}) \in V(D(P))$  then delete all vertices
$(\worker{}, \firm{i})$ such that $(\worker{}, \firm{i}) (\worker{}, \firm{})  \in A_{\Workers}$.
\item\label{item:reduce2}
If no arc in  $A_{\Workers}$ leaves $(\worker{}, \firm{}) \in V(D(P))$  then delete all vertices
$(\worker{i}, \firm{})$ such that $(\worker{i}, \firm{}) (\worker{}, \firm{})  \in A_{\Firms}$.
\end{enumerate}
\end{definition}

The operation described in Definition~\ref{def:reduce} reduces the size of the matching digraph leaving behind a new smaller one.
We note that deleting a vertex does not simply remove a pair, it deletes all matchings that contain that pair so the number of candidate solutions is shrunk.
An important observation is that there may exist vertices that are not removed by applying $\reduce$ to $D(P)$, but which will be removed when applying $\reduce$ to $\reduce\big(D(P)\big)$.
Formally, repeated application of $\reduce$ to $D(P)$ defines the following iterative procedure.

\begin{definition}[The iterated deletion of unattractive alternatives (IDUA)]\label{def:IDUA}
Given an instance of the matching problem, $P$, and its associated matching digraph, $D(P) = \big(V, A_{\Workers} \cup A_{\Firms}\big)$, define $D^{0} = D(P)$, and for each $k \geq 1$, define $D^{k} = \reduce(D^{k-1})$. Finally, define the {\em normal form of} $P$, $D^{*}(P)$, as $D^{k}$ with minimum $k$ such that $D^{k} = \reduce(D^{k}),$ i.e., no further reductions will take place for such $D^{k}.$
\end{definition}

Lemma~2 in \cite{GutinNeary:2023:GEB} shows that $D^{*}(P)$ is uniquely defined.
This is important to point out since clearly there may be multiple vertices that satisfy the requirements (e.g., part \ref{item:reduce1} in Definition~\ref{def:reduce}).
The lemma assures that the order in which vertices satisfying the condition are chosen (and hence the order in which other vertices are deleted) will not impact the final outcome.

The following lemma confirms that while the IDUA procedure can discard information, no ``important'' information is lost in the sense that the set of stable matchings for $D(P)$ can be computed using only the normal form $D^*(P)$.

\begin{lemma}[\cite{BalinskiRatier:1997:} and \cite{GutinNeary:2023:GEB}]\label{lemma:iduaSameMatchings}
The iterated deletion of unattractive alternatives does not change the set of stable matchings.  That is, $D(P)$ and $D^*(P)$ contain exactly the same stable matchings.
\end{lemma}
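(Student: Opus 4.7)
The plan is to reduce the statement to a single-step claim and then induct. Since $D^*(P)$ is obtained by iterating $\reduce$ finitely many times (by the uniqueness and termination guaranteed by Lemma~2 in \cite{GutinNeary:2023:GEB}), it suffices to prove the one-step claim: for every matching digraph $D$ of the form arising during IDUA, the kernels of $D$ coincide with the kernels of $\reduce(D)$. Chaining this through the sequence $D^0, D^1, \ldots, D^*(P)$ yields the lemma.

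By symmetry I would handle only rule (i) of Definition~\ref{def:reduce}; rule (ii) is analogous. Fix a vertex $(\worker{},\firm{}) \in V(D)$ with no outgoing arc in $A_{\Firms}$, and let $(\worker{},\firm{i})$ be any vertex deleted because of it, so $\worker{}$ strictly prefers $\firm{}$ to $\firm{i}$. The key structural observation is that, by construction, $(\worker{},\firm{i})$ has an outgoing $A_{\Workers}$-arc to $(\worker{},\firm{})$, and further to every $(\worker{},\firm{j})$ that $\worker{}$ prefers to $\firm{}$ (and hence to $\firm{i}$). Neither $(\worker{},\firm{})$ nor any such ``better'' $(\worker{},\firm{j})$ is deleted by the rule.

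\textbf{Forward direction.} Let $M$ be a stable matching (kernel) in $D$. I first show that $M$ contains no deleted vertex. Indeed, since $(\worker{},\firm{})$ has no $A_{\Firms}$-out-arcs, any vertex dominating it must lie in row $\worker{}$; so either $(\worker{},\firm{}) \in M$, or some $(\worker{},\firm{j}) \in M$ with $\worker{}$ preferring $\firm{j}$ to $\firm{}$. In either case row $\worker{}$ already contains a vertex of $M$ that differs from $(\worker{},\firm{i})$, so $(\worker{},\firm{i}) \notin M$. Hence $M \subseteq V(\reduce(D))$, and independence survives because $\reduce(D)$ inherits its arcs from $D$. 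Domination survives too: for any $v \in V(\reduce(D)) \setminus M$, the dominator in $M$ supplied by stability in $D$ is an undeleted vertex (since $M$ avoids deleted vertices), so the witnessing arc is still present in $\reduce(D)$.

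\textbf{Reverse direction.} Let $M$ be a kernel of $\reduce(D)$. Independence in $D$ is immediate. I must check domination for every $v \in V(D)$. If $v$ survived, the kernel property of $M$ in $\reduce(D)$ gives the needed dominator, and the arc is preserved in $D$. If $v = (\worker{},\firm{i})$ was deleted, then by the kernel property applied to $(\worker{},\firm{}) \in V(\reduce(D))$, either $(\worker{},\firm{}) \in M$ or some $(\worker{},\firm{j}) \in M$ with $\worker{}$ preferring $\firm{j}$ to $\firm{}$. By the structural observation above, either of these is an out-neighbour of $(\worker{},\firm{i})$ in $D$, so $v$ is dominated.

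The only real subtlety, and therefore the main thing to get right, is the bookkeeping in both directions about \emph{which} arcs and vertices survive $\reduce$ and which do not; everything else is mechanical once one notices that deleted vertices always have their row's ``trigger vertex'' $(\worker{},\firm{})$, together with every row-mate preferred to it, as out-neighbours.
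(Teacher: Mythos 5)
The paper does not actually prove Lemma~\ref{lemma:iduaSameMatchings}; it imports it from \cite{BalinskiRatier:1997:} and \cite{GutinNeary:2023:GEB}, so there is no in-paper argument to compare yours against. Taken on its own terms, your overall strategy is right: reducing to the one-step claim that $D$ and $\reduce(D)$ have the same kernels and then chaining is exactly what is needed, and your forward direction is correct — the kernel condition at a trigger vertex $(\worker{},\firm{})$ with no outgoing $A_{\Firms}$-arc forces $M$ to occupy row $\worker{}$ at $\firm{}$ or better, so no deleted vertex can lie in $M$, and independence and domination pass to the induced subdigraph.

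The one step that fails as written is in the reverse direction, where you apply the kernel property of $M$ to ``$(\worker{},\firm{}) \in V(\reduce(D))$''. The trigger need not survive the round: Definition~\ref{def:reduce} fires all deletions simultaneously, and if the same row contains a second trigger $(\worker{},\firm{}')$ with $\firm{}'$ preferred by $\worker{}$ to $\firm{}$ (e.g.\ $\worker{}$ is the top choice of both $\firm{}$ and $\firm{}'$), then rule \ref{item:reduce1} at $(\worker{},\firm{}')$ deletes $(\worker{},\firm{})$ itself, and your appeal to the kernel property at that vertex is vacuous. The repair is short and you should make it explicit: for each deleted vertex $(\worker{},\firm{i})$, choose as witness the $\worker{}$-most-preferred trigger $(\worker{},\firm{}^*)$ in row $\worker{}$. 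That vertex cannot be deleted by rule \ref{item:reduce1} (any deleting trigger would be a strictly more preferred trigger in the same row, contradicting maximality) nor by rule \ref{item:reduce2} (having no outgoing $A_{\Firms}$-arc means no worker in column $\firm{}^*$ outranks $\worker{}$ for $\firm{}^*$, so nothing can delete it column-wise), so it survives into $\reduce(D)$; and since $\firm{}^*$ is weakly preferred to $\firm{}$, which is strictly preferred to $\firm{i}$, the dominator that the kernel property supplies at $(\worker{},\firm{}^*)$ is still an out-neighbour of $(\worker{},\firm{i})$ in $D$ by transitivity. With that substitution the proof is complete; the rest of your bookkeeping about surviving arcs is accurate.
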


Let us now show the effect that running IDUA has on the matching problem of Reduced Example~\ref{exR:initial}.
In Figure~\ref{fig:ex2} there are two panels.
The left hand panel, panel (a), depicts $D(P)$ and is the same as that in Figure~\ref{fig:ex1}.
The right hand panel, panel (b), depicts the normal form, $D^*(P)$.
In the interest of space we will not display here how the IDUA procedure incrementally reduced $D(P)$ to $D^*(P)$.

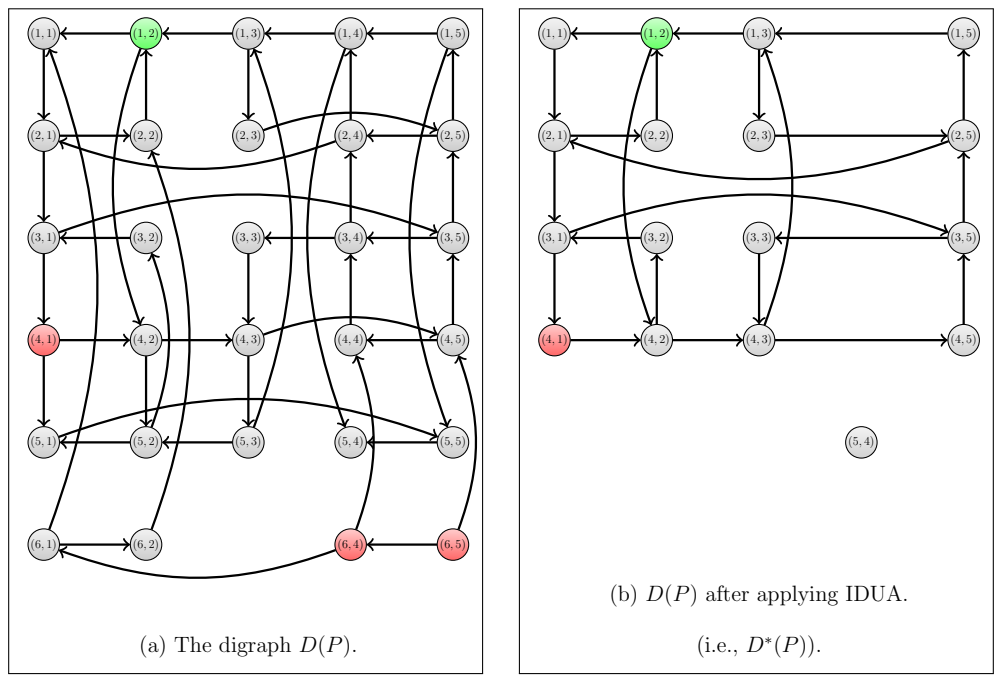
\begin{figure}[h!]
\begin{center}
\fbox{
\begin{minipage}{5.8cm}
\tikzstyle{vertexDOT}=[scale=0.12,circle,draw,fill]
\tikzstyle{vertexY}=[circle,draw, top color=gray!10, bottom color=gray!40, minimum size=11pt, scale=0.38, inner sep=0.99pt]

\WithColor{\tikzstyle{vertexIn}=[circle,draw, top color=green!20, bottom color=green!60, minimum size=11pt, scale=0.38, inner sep=0.99pt]}
\WithoutColor{\tikzstyle{vertexIn}=[circle,draw, top color=gray!1, bottom color=gray!1, minimum size=21pt, scale=0.38, inner sep=2.99pt]}

\WithColor{\tikzstyle{vertexOut}=[circle,draw, top color=red!20, bottom color=red!60, minimum size=11pt, scale=0.38, inner sep=0.99pt]}
\WithoutColor{\tikzstyle{vertexOut}=[circle,draw, top color=gray!99, bottom color=gray!99, minimum size=21pt, scale=0.38, inner sep=2.99pt]}

\begin{tikzpicture}[scale=0.68]

\node (x11) at (1,11) [vertexY] {$(1,1)$};
\node (x12) at (3,11) [vertexIn] {$(1,2)$};
\node (x13) at (5,11) [vertexY] {$(1,3)$};
\node (x14) at (7,11) [vertexY] {$(1,4)$};
\node (x15) at (9,11) [vertexY] {$(1,5)$};

\node (x21) at (1,9) [vertexY] {$(2,1)$};
\node (x22) at (3,9) [vertexY] {$(2,2)$};
\node (x23) at (5,9) [vertexY] {$(2,3)$};
\node (x24) at (7,9) [vertexY] {$(2,4)$};
\node (x25) at (9,9) [vertexY] {$(2,5)$};

\node (x31) at (1,7) [vertexY] {$(3,1)$};
\node (x32) at (3,7) [vertexY] {$(3,2)$};
\node (x33) at (5,7) [vertexY] {$(3,3)$};
\node (x34) at (7,7) [vertexY] {$(3,4)$};
\node (x35) at (9,7) [vertexY] {$(3,5)$};

\node (x41) at (1,5) [vertexOut] {$(4,1)$};
\node (x42) at (3,5) [vertexY] {$(4,2)$};
\node (x43) at (5,5) [vertexY] {$(4,3)$};
\node (x44) at (7,5) [vertexY] {$(4,4)$};
\node (x45) at (9,5) [vertexY] {$(4,5)$};

\node (x51) at (1,3) [vertexY] {$(5,1)$};
\node (x52) at (3,3) [vertexY] {$(5,2)$};
\node (x53) at (5,3) [vertexY] {$(5,3)$};
\node (x54) at (7,3) [vertexY] {$(5,4)$};
\node (x55) at (9,3) [vertexY] {$(5,5)$};

\node (x61) at (1,1) [vertexY] {$(6,1)$};
\node (x62) at (3,1) [vertexY] {$(6,2)$};
\node (x64) at (7,1) [vertexOut] {$(6,4)$};
\node (x65) at (9,1) [vertexOut] {$(6,5)$};

\draw [->,line width=0.03cm] (x15) -- (x14);
\draw [->,line width=0.03cm] (x14) -- (x13);
\draw [->,line width=0.03cm] (x13) -- (x12);
\draw [->,line width=0.03cm] (x12) -- (x11);

\draw [->,line width=0.03cm] (x23) to [out=20,in=160] (x25);
\draw [->,line width=0.03cm] (x25) -- (x24);
\draw [->,line width=0.03cm] (x24) to [out=200,in=340] (x21);
\draw [->,line width=0.03cm] (x21) -- (x22);

\draw [->,line width=0.03cm] (x32) -- (x31);
\draw [->,line width=0.03cm] (x31) to [out=20,in=160] (x35);
\draw [->,line width=0.03cm] (x35) -- (x34);
\draw [->,line width=0.03cm] (x34) -- (x33);

\draw [->,line width=0.03cm] (x41) -- (x42);
\draw [->,line width=0.03cm] (x42) -- (x43);
\draw [->,line width=0.03cm] (x43) to [out=20,in=160] (x45);
\draw [->,line width=0.03cm] (x45) -- (x44);

\draw [->,line width=0.03cm] (x53) -- (x52);
\draw [->,line width=0.03cm] (x52) -- (x51);
\draw [->,line width=0.03cm] (x51) to [out=20,in=160] (x55);
\draw [->,line width=0.03cm] (x55) -- (x54);

\draw [->,line width=0.03cm] (x65) -- (x64);
\draw [->,line width=0.03cm] (x64) to [out=200,in=340] (x61);
  \draw [->,line width=0.03cm] (x61) -- (x62);

\draw [->,line width=0.03cm] (x61) to [out=70,in=290] (x11);
\draw [->,line width=0.03cm] (x11) -- (x21);
\draw [->,line width=0.03cm] (x21) -- (x31);
\draw [->,line width=0.03cm] (x31) -- (x41);
\draw [->,line width=0.03cm] (x41) -- (x51);

\draw [->,line width=0.03cm] (x62) to [out=70,in=290] (x22);
\draw [->,line width=0.03cm] (x22) -- (x12);
\draw [->,line width=0.03cm] (x12) to [out=250,in=110] (x42);
\draw [->,line width=0.03cm] (x42) -- (x52);
\draw [->,line width=0.03cm] (x52) to [out=70,in=290] (x32);

\draw [->,line width=0.03cm] (x33) -- (x43);
\draw [->,line width=0.03cm] (x43) -- (x53);
\draw [->,line width=0.03cm] (x53) to [out=70,in=290] (x13);
\draw [->,line width=0.03cm] (x13) -- (x23);

\draw [->,line width=0.03cm] (x64) to [out=70,in=290] (x44);
\draw [->,line width=0.03cm] (x44) -- (x34);
\draw [->,line width=0.03cm] (x34) -- (x24);
\draw [->,line width=0.03cm] (x24) -- (x14);
\draw [->,line width=0.03cm] (x14) to [out=250,in=110] (x54);

\draw [->,line width=0.03cm] (x65) to [out=70,in=290] (x45);
\draw [->,line width=0.03cm] (x45) -- (x35);
\draw [->,line width=0.03cm] (x35) -- (x25);
\draw [->,line width=0.03cm] (x25) -- (x15);
\draw [->,line width=0.03cm] (x15) to [out=250,in=110] (x55);

\node [scale=0.7] at (5,-1) {(a) The digraph $D(P)$.};
\end{tikzpicture}

\end{minipage} } \hspace{0.2cm} \fbox{
\begin{minipage}{5.8cm}
\tikzstyle{vertexDOT}=[scale=0.12,circle,draw,fill]
\tikzstyle{vertexY}=[circle,draw, top color=gray!10, bottom color=gray!35, minimum size=11pt, scale=0.38, inner sep=0.99pt]

\WithColor{\tikzstyle{vertexIn}=[circle,draw, top color=green!20, bottom color=green!60, minimum size=11pt, scale=0.38, inner sep=0.99pt]}
\WithoutColor{\tikzstyle{vertexIn}=[circle,draw, top color=gray!1, bottom color=gray!1, minimum size=21pt, scale=0.38, inner sep=2.99pt]}

\WithColor{\tikzstyle{vertexOut}=[circle,draw, top color=red!20, bottom color=red!60, minimum size=11pt, scale=0.38, inner sep=0.99pt]}
\WithoutColor{\tikzstyle{vertexOut}=[circle,draw, top color=gray!99, bottom color=gray!99, minimum size=21pt, scale=0.38, inner sep=2.99pt]}

\begin{tikzpicture}[scale=0.68]

\node (x11) at (1,11) [vertexY] {$(1,1)$};
\node (x12) at (3,11) [vertexIn] {$(1,2)$};
\node (x13) at (5,11) [vertexY] {$(1,3)$};
\node (x15) at (9,11) [vertexY] {$(1,5)$};

\node (x21) at (1,9) [vertexY] {$(2,1)$};
\node (x22) at (3,9) [vertexY] {$(2,2)$};
\node (x23) at (5,9) [vertexY] {$(2,3)$};
\node (x25) at (9,9) [vertexY] {$(2,5)$};

\node (x31) at (1,7) [vertexY] {$(3,1)$};
\node (x32) at (3,7) [vertexY] {$(3,2)$};
\node (x33) at (5,7) [vertexY] {$(3,3)$};
\node (x35) at (9,7) [vertexY] {$(3,5)$};

\node (x41) at (1,5) [vertexOut] {$(4,1)$};
\node (x42) at (3,5) [vertexY] {$(4,2)$};
\node (x43) at (5,5) [vertexY] {$(4,3)$};
\node (x45) at (9,5) [vertexY] {$(4,5)$};

\node (x54) at (7,3) [vertexY] {$(5,4)$};


\draw [->,line width=0.03cm] (x15) -- (x13);
\draw [->,line width=0.03cm] (x13) -- (x12);
\draw [->,line width=0.03cm] (x12) -- (x11);

\draw [->,line width=0.03cm] (x23) -- (x25);
\draw [->,line width=0.03cm] (x25) to [out=200,in=340] (x21);
\draw [->,line width=0.03cm] (x21) -- (x22);

\draw [->,line width=0.03cm] (x32) -- (x31);
\draw [->,line width=0.03cm] (x31) to [out=20,in=160] (x35);
\draw [->,line width=0.03cm] (x35) -- (x33);

\draw [->,line width=0.03cm] (x41) -- (x42);
\draw [->,line width=0.03cm] (x42) -- (x43);
\draw [->,line width=0.03cm] (x43) -- (x45);

\draw [->,line width=0.03cm] (x11) -- (x21);
\draw [->,line width=0.03cm] (x21) -- (x31);
\draw [->,line width=0.03cm] (x31) -- (x41);

\draw [->,line width=0.03cm] (x22) -- (x12);
\draw [->,line width=0.03cm] (x12) to [out=250,in=110] (x42);
\draw [->,line width=0.03cm] (x42) -- (x32);

\draw [->,line width=0.03cm] (x33) -- (x43);
\draw [->,line width=0.03cm] (x43) to [out=70,in=290] (x13);
\draw [->,line width=0.03cm] (x13) -- (x23);

\draw [->,line width=0.03cm] (x45) -- (x35);
\draw [->,line width=0.03cm] (x35) -- (x25);
\draw [->,line width=0.03cm] (x25) -- (x15);

\node [scale=0.7] at (5,0) {(b) $D(P)$ after applying IDUA.};
\node [scale=0.7] at (5,-1) {(i.e., $D^*(P)$).};
\end{tikzpicture}
\end{minipage} }
\end{center}
\caption{In (a) we depict $D(P)$ associated with Reduced Example~\ref{exR:initial}.
After applying IDUA we obtain the digraph, $D^*(P)$, depicted in (b).}
\label{fig:ex2}
\end{figure}

We now highlight some observations concerning the structure of $D^*(P)$ in panel (b) of Figure~\ref{fig:ex2}.
First of all, while it is not possible to display all the stable matchings on one image, by comparing the digraph in panel (b) of Figure~\ref{fig:ex2} with the digraph in Figure~\ref{fig:exExtremeMatchings} one can see that the worker-optimal and firm-optimal stable matchings of $D(P)$ remain in $D^*(P)$. (As indeed they must by Lemma~\ref{lemma:iduaSameMatchings}.)
In fact, closer inspection reveals that the worker-optimal and firm-optimal stable matchings of $D(P)$ are made up of the collections of vertices that have no horizontal arcs out of them in $D^*(P)$ and the vertices that have no vertical arcs out of them in $D^*(P)$, respectively (see Lemma~\ref{lemma:extremalMatches}).
Thus, the extremal stable matchings of $D^*(P)$ are precisely those found by running deferred acceptance twice, once with each side in the role of proposer.
Or, to put it another way, the normal form $D^*(P)$ is a smaller matching market than $D(P)$ that is ``boxed in'' by the two extremal stable matchings since all information from preferences lists ``beyond'' these two extremal matchings is removed.

The second thing to note is that there is only one vertex in row 5 and column 4 of $D^*(P)$.
This means that worker $\worker{5}$ is certain to be employed at firm $\firm{4}$ in every stable matching. This same observation could be made by examining Figure~\ref{fig:exExtremeMatchings} wherein vertex $(5,4)$ was a hybrid of the colours yellow and blue indicating that it was in both the worker-optimal stable matching and the firm-optimal stable matching, and hence, due to the lattice structure of the set of stable matchings, must be included in every stable matching.
We also note that there is no vertex in the row labelled by $\worker{6}$ of $D^*(P)$. From this we can conclude that in every stable matching all positions are filled but worker $\worker{6}$ is never employed.

A {\it balanced} two-sided market is one with the same number of workers as positions.
A {\it perfect} matching in a balanced two-sided market is one in which every worker is employed and every position is filled.
The third and final observation we make regarding panel (b) of Figure~\ref{fig:ex2} concerns balanced markets and perfect stable matchings.
While $D^*(P)$ is a balanced market, in this case with five workers and five positions, it is also the case that every stable matching is a perfect matching with all five workers employed and all five positions filled.

The next two results, Lemma~\ref{lemma:extremalMatches} and Lemma~\ref{lemma:lem1}, show that the observations above are not artefacts of this specific example.
Rather they are general properties that hold for the normal form of every two-sided matching market.
Before the formal statements, some additional notation is needed.
Let $d^{*,+}_{\Workers}(x)$ denote the number of arcs out of $x \in V(D^*(P))$
in $A_{\Workers}$ that still remain in $D^*(P)$  and analogously let $d^{*,+}_{\Firms}(x)$ denote the number of arcs out of $x \in V(D^*(P))$ in $A_{\Firms}$ that still remain in $D^*(P)$.

\begin{lemma}[\cite{BalinskiRatier:1997:} and \cite{GutinNeary:2023:GEB}] \label{lemma:extremalMatches}
Let $P$ be an instance of the matching problem. The following two sets of vertices, $M_{\Workers}$ and $M_{\Firms}$, are both stable matchings in $P$.

\begin{center}
$\begin{array}{rcl} \vspace{0.2cm}
 M_{\Workers} & = & \{ x \in V(D^*(P)) \; | \; d_{\Workers}^{*,+}(x)=0 \} \\
 M_{\Firms} & = & \{ x \in V(D^*(P)) \; | \; d_{\Firms}^{*,+}(x)=0 \} \\
\end{array}$
\end{center}
\end{lemma}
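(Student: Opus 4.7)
The plan is to work entirely inside $D^*(P)$, invoking Lemma~\ref{lemma:iduaSameMatchings} at the end to transfer the conclusion back to $D(P)$. I will argue for $M_\Workers$; the argument for $M_\Firms$ is fully symmetric under the exchange of workers and firms, rows and columns, and the two parts of Definition~\ref{def:reduce}. Two things must be checked: (a) $M_\Workers$ is an independent set in $D^*(P)$ (so it is a matching), and (b) every vertex of $D^*(P)$ is either in $M_\Workers$ or has an out-neighbour in $M_\Workers$ (the kernel/stability condition).

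For (a), row-uniqueness is immediate: within any row the arcs of $A_\Workers$ form the strict total order given by $w$'s preferences, so at most one vertex per row can have $d^{*,+}_\Workers = 0$, namely the most preferred remaining firm of $w$. Column-uniqueness is the subtle step and crucially uses the fact that $D^*(P)$ is a fixed point of $\reduce$. Suppose for contradiction that both $(w_1,f)$ and $(w_2,f)$ lie in $M_\Workers$. Then $d^{*,+}_\Workers((w_i,f)) = 0$ for $i=1,2$, so part~\ref{item:reduce2} of Definition~\ref{def:reduce} would still demand the deletion of every $(w',f) \in V(D^*(P))$ with $(w',f)(w_i,f) \in A_\Firms$, i.e., every $w'$ that $f$ ranks below $w_i$. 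Since $D^*(P) = \reduce(D^*(P))$, no such deletion occurs, so each of $w_1,w_2$ is the lowest-ranked remaining worker for $f$ in column $f$; strictness of $f$'s preference then forces $w_1 = w_2$.

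For (b), take $v = (w,f) \in V(D^*(P)) \setminus M_\Workers$. Then $d^{*,+}_\Workers(v) \geq 1$, so row $w$ is non-empty and, by the row-uniqueness in (a), contains a unique top element $(w,f^*) \in M_\Workers$. Because $A_\Workers$ is defined through all pairwise preference comparisons and not merely adjacent ones, the arc $(w,f)(w,f^*)$ belongs to $A_\Workers$, and since both endpoints survive in $D^*(P)$ this arc is preserved. Hence $v$ has an out-neighbour in $M_\Workers$. Together with (a), this shows that $M_\Workers$ is a stable matching in $D^*(P)$, and Lemma~\ref{lemma:iduaSameMatchings} then gives the same in $D(P)$.

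The main obstacle I anticipate is recognising that column-uniqueness for $M_\Workers$ is enforced by the \emph{other} reduction rule (part~\ref{item:reduce2}) rather than by the rule whose hypothesis defines membership in $M_\Workers$ (part~\ref{item:reduce1}); once that duality between defining condition and fixed-point-by-the-other-rule is spotted, the remainder is just linearity of preferences within each row and column together with transitivity of the arcs in $A_\Workers$, and the $M_\Firms$ case follows by swapping roles throughout.
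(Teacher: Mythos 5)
Your proof is correct. Note that the paper does not actually prove Lemma~\ref{lemma:extremalMatches} itself --- it imports the result from \cite{GutinNeary:2023:GEB} and offers only an informal gloss (``everyone on one side gets their most preferred feasible partner, so no one on that side will deviate''), so there is no in-paper argument to compare against line by line. Your self-contained verification of the kernel property is sound on all three counts: row-uniqueness follows from strictness of each worker's preference order restricted to the surviving vertices; column-uniqueness follows, as you say, from the fact that $D^*(P)$ is a fixed point of part~\ref{item:reduce2} of Definition~\ref{def:reduce}, which forces any $(\worker{},\firm{})$ with $d^{*,+}_{\Workers}=0$ to be the \emph{least} preferred surviving worker in its column and hence unique there; and the absorption condition holds because the arc from any non-top vertex of a row to the row's top vertex lies in $A_{\Workers}$ (all pairwise comparisons, not just adjacent ones) and survives as an induced arc. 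The duality you flag as the main obstacle --- that membership in $M_{\Workers}$ is defined by rule~\ref{item:reduce1}'s hypothesis but column-uniqueness is policed by rule~\ref{item:reduce2} --- is exactly the observation the authors themselves rely on in the appendix proof of Lemma~\ref{lem:NotIn} (``$v$ can have no arcs into it from $A_{\Firms}$ as $D_i$ is obtained by running IDUA''), so your argument is not only valid but consistent with the machinery used elsewhere in the paper. The final transfer back to $D(P)$ via Lemma~\ref{lemma:iduaSameMatchings} is legitimate, and the symmetric treatment of $M_{\Firms}$ needs no further comment.
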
 

Lemma~\ref{lemma:extremalMatches} states that all remaining participants on the same side of the market have a different most-preferred partner in the normal form, and that such an assignment must be a stable matching.
Clearly such an assignment must be a matching because otherwise two participants on the same side, two workers say, have the same most-preferred firm.
But if this were the case, then the IDUA procedure is not yet finished.
Such a matching must also be a stable matching because every participant on one side of the market is paired with their most preferred feasible partner, implying that no participant on that side of the market is willing to break with their current partner which means there cannot be any blocking pairs.
That is, the extremal matchings in $D^*(P)$ are precisely the extremal stable matchings in $D(P)$.

The next result, the proof of which is found in Appendix \ref{app:proofs}, confirms that an analyst interested in stable outcomes in two-sided matching markets can always confine attention to perfect stable matchings in balanced markets.
In fact, the result extends the rural hospitals theorem as it also observes that some worker-position pairs may be discarded even amongst those workers and positions that appear in every stable matching.
For example, from panel (b) in Figure~\ref{fig:ex2} it is clear that worker $\worker{5}$ will never be paired with firm $\firm{2}$ in any stable matching.
And this despite both appearing in every stable matching.

\begin{lemma} \label{lemma:lem1} 
In $D^*(P)$ there remain $r$ non-empty rows and $r$ non-empty columns (for some $r$) and all stable matchings in $D^*(P)$ contain $r$ vertices (that is, they are perfect matchings in $D^*(P)$). 
\end{lemma}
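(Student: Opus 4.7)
The plan is two-stage: first use Lemma \ref{lemma:extremalMatches} to pin down $|M_{\Workers}|$ and $|M_{\Firms}|$, and then use an alternating-path argument in $M_{\Workers} \cup \match$ to lift any stable matching to the same size.

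For the first stage, restrict $A_{\Workers}$ to any non-empty row of $D^*(P)$: this yields a total order on that row (the worker's preference over her remaining firms), so the row contains exactly one vertex with $d_{\Workers}^{*,+} = 0$, corresponding to the worker's top remaining choice. Hence $M_{\Workers}$ contains exactly one vertex per non-empty row, giving $|M_{\Workers}| = R$, where $R$ is the number of non-empty rows; symmetrically $|M_{\Firms}| = C$, where $C$ is the number of non-empty columns. Since $M_{\Workers}$ is a matching, its $R$ vertices occupy $R$ distinct (necessarily non-empty) columns, so $C \geq R$; by the symmetric argument applied to $M_{\Firms}$, $R \geq C$, and we set $r := R = C$.

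For the second stage, let $\match$ be any stable matching in $D^*(P)$. Trivially $|\match| \leq r$, so it suffices to rule out the existence of a worker $\worker{}$ lying in a non-empty row but unmatched in $\match$. Assuming such $\worker{}$ exists, build the alternating sequence $v_0 := \worker{}$, $v_{2k+1} := M_{\Workers}(v_{2k})$, $v_{2k+2} := \match(v_{2k+1})$. In $M_{\Workers} \cup \match$ every vertex has degree at most two and $v_0$ has degree one (no $\match$-edge), so the sequence is a simple path and must terminate in the finite digraph. By the first stage every worker in a non-empty row is covered by $M_{\Workers}$, so the path cannot stall at a worker; it must terminate at a firm $v_{2k+1}$ unmatched in $\match$. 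But by construction $v_{2k+1}$ is the top remaining firm of $v_{2k}$, so $v_{2k}$ prefers $v_{2k+1}$ over its current $\match$-partner $v_{2k-1}$ (or over being unmatched, when $k = 0$); together with $v_{2k+1}$ being unmatched in $\match$ and $(v_{2k}, v_{2k+1}) \in V(D^*(P))$ being an acceptable pair, this produces a blocking pair against $\match$, contradicting stability. Hence $|\match| = r$, and since $\match$ is a matching of size $r$ in a configuration with $r$ non-empty rows and $r$ non-empty columns, $\match$ is a perfect matching in $D^*(P)$.

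The main subtlety I expect is confirming that the alternating path must terminate at a firm rather than a worker; this is exactly what the first stage secures, since every worker appearing on the path lies in a non-empty row and is therefore covered by $M_{\Workers}$, so the path can never stall on the $M_{\Workers}$-step.
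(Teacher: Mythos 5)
Your proof is correct, but it takes a more self-contained route than the paper's. The paper's own proof is three lines: it observes (via Lemma~\ref{lemma:extremalMatches}) that $M_{\Workers}$ and $M_{\Firms}$ are stable matchings of sizes $r_{\Workers}$ and $r_{\Firms}$ respectively, and then simply invokes the classical fact that all stable matchings of a given instance have the same cardinality (the McVitie--Wilson part of the rural hospitals theorem, cited in the paper's footnotes) to conclude $r_{\Workers}=r_{\Firms}=r$ and that every stable matching has $r$ elements. You avoid that external citation twice over: first by deducing $R=C$ purely from the fact that the $R$ vertices of $M_{\Workers}$ occupy $R$ distinct non-empty columns (and symmetrically), and second by replacing the equal-cardinality theorem with a direct alternating-path argument in $M_{\Workers}\cup\match$, which is in effect a proof of that theorem specialised to the normal form. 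Both arguments are sound; yours is longer but buys self-containment, which is arguably a virtue here since the paper positions this lemma as an ingredient in \emph{deriving} the rural hospitals theorem, so leaning on that theorem's cardinality statement has a whiff of circularity (harmless logically, since McVitie--Wilson is proved independently, but worth noticing). One small point you gloss over: to see that the path cannot terminate at a worker $v_{2k}$ for $k\ge 1$, you need not only that $v_{2k}$ is covered by $M_{\Workers}$ but also that its $M_{\Workers}$-edge differs from the $\mu$-edge $(v_{2k-1},v_{2k})$ just traversed; this holds because the firm $v_{2k-1}$ is already $M_{\Workers}$-matched to $v_{2k-2}$, and your "simple path in a union of two matchings" framing absorbs this, so it is not a gap.
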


We emphasise that Lemma~\ref{lemma:lem1} holds no matter how unbalanced the original market is and no matter how incomplete preference lists are. As such, any queries concerning the stable matchings of an unbalanced two-sided matching market can be answered by posing the same (reduced) queries of perfect stable matchings in the normal form.
It is not uncommon for researchers to limit attention to balanced markets.
But since the normal form is always balanced, this is not as restrictive an assumption as first appears.

Together Lemma~\ref{lemma:extremalMatches} and Lemma~\ref{lemma:lem1} paint a detailed picture of the structure of the normal form.
In the next section we show how these structural properties can guide us in how to determine the feasibility of assignment constraints.


\section{The algorithm}\label{sec:Algorithm}

The algorithm of this section solves {\sc Reduced Constrained Stable Matchings}. That is, given $D(P)$ and disjoint sets $V^{in}$ and $V^{out}$ of vertices of $D(P)$, return all stable matchings of $D(P)$, each of which includes all vertices in $V^{in}$ and does not contain any vertex in $V^{out}$.
We remind the reader that our reduction in Section~\ref{subsec:reduce} ensures that each worker and each firm appears in at most one pair in $V^{in}$.

For readability, this section is split in two. In Section \ref{sec:Detail} we state precisely the algorithm, and we prove its correctness and time complexity.
In Section \ref{sec:Overview} we give a high level overview of the algorithm by showing how it addresses Reduced Question \ref{q:QuestionR} concerning the matching problem in Reduced Example \ref{exR:initial}.

\subsection{The algorithm:\ in detail}\label{sec:Detail}

We now state the algorithm.
Recall that we assume that the many-to-one environment has been reduced to its equivalent one-to-one formulation.

To find all solutions satisfying the constraints we define a recursive algorithm, {\bf Algorithm~\ref{alg:ALLrec}}, which is called by
our main algorithm, {\bf Algorithm~\ref{alg:ALLmain}}, defined as follows.

\begin{algorithm}\label{alg:ALLmain}
The algorithm has 4 steps.
\begin{description}
 \item[Step \ref{alg:ALLmain}.1.]
 Determine $D^*(P)$ by running IDUA.
 
 \item[Step \ref{alg:ALLmain}.2.]\label{algMain:runIDUA2}
 Let $AllSolutions = \emptyset$ and let $r$ be equal to the number of non-empty rows (or columns) in $D^*(P)$.

 \item[Step \ref{alg:ALLmain}.3.] Call Algorithm~\ref{alg:ALLrec} ($D^*(P)$, $V^{in}$, $V^{out}$)

 \item[Step \ref{alg:ALLmain}.4.] Return $AllSolutions$. 
\end{description}
\end{algorithm}

The following is now the recursive part of the algorithm.

\begin{algorithm}\label{alg:ALLrec} ($D$, $V^{in}$, $V^{out}$)

Note that Algorithm~\ref{alg:ALLrec} is called using the following command: {\bf Algorithm~\ref{alg:ALLrec} ($D$, $V^{in}$, $V^{out}$)}, where
$D$, $V^{in}$ and $V^{out}$ make up the input to the algorithm.
\begin{description}

 \item[Step \ref{alg:ALLrec}.1.]  If {$V^{in}\setminus V(D)\ne \emptyset$}, then return. For all vertices, $(w,f) \in V^{in}$, place all vertices, different from $(w,f)$, 
in the same row or column as $(w,f)$ in $V^{out}$.

 \item[Step \ref{alg:ALLrec}.2.] While there is a vertex  $(w,f) \in V^{out}$ where $d_{\Workers}^+((w,f))=0$ or $d_{\Firms}^+((w,f))=0$,
then modify $D$ by letting it be the digraph obtained by running IDUA on $D - (w,f)$.

 \item[Step \ref{alg:ALLrec}.3.] Now $(w,f) \not\in V^{out}$ for all $(w,f) \in V(D)$ where $d_{\Workers}^+((w,f))=0$ or $d_{\Firms}^+((w,f))=0$. 
Define $M_{\Workers}$ and  $M_{\Firms}$ as follows.

\begin{itemize}
 \item Let $M_{\Workers}$ contain all $(w,f) \in V(D)$ where $d_{\Workers}^+((w,f))=0$.
 \item Let $M_{\Firms}$ contain all $(w,f) \in V(D)$ where $d_{\Firms}^+((w,f))=0$.
\end{itemize}

 \item[Step \ref{alg:ALLrec}.4.] If $|M_{\Workers}|<r$ (which is equivalent to $|M_{\Firms}|<r$ as $|M_{\Workers}|=|M_{\Firms}|$) or if $V^{in} \not \subseteq V(D)$ then return.

 \item[Step \ref{alg:ALLrec}.5.] If $M_{\Workers}=M_{\Firms}$, then add $M_{\Workers}$ to $AllSolutions$ and return.

 \item[Step \ref{alg:ALLrec}.6.] If $M_{\Workers} \not= M_{\Firms}$, then let $(w,f) \in M_{\Workers} \setminus M_{\Firms}$. \big(Such an $(w,f)$ exists since $M_{\Workers} \not= M_{\Firms}$ and $|M_{\Workers}|=|M_{\Firms}|$. Moreover, if there are multiple vertices in $M_{\Workers} \setminus M_{\Firms}$, it does not matter which is chosen.\big) Now make the following two recursive calls before returning.

\begin{description}
\item[Step \ref{alg:ALLrec}.6a.] Call Algorithm~\ref{alg:ALLrec} ($D$, $V^{in} \cup (w,f)$, $V^{out}$).
\item[Step \ref{alg:ALLrec}.6b.] Call Algorithm~\ref{alg:ALLrec} ($D$, $V^{in}$, $V^{out} \cup (w,f)$).

\end{description}
\end{description}
\end{algorithm}

A core idea underlying the algorithm is {that if recursive calls are made in Step~\ref{alg:ALLrec}.6, then the stable matching $M_{\Workers}$ will be found in the recursive call made in Step~\ref{alg:ALLrec}.6a and the 
stable matching $M_{\Firms}$ will be found in the recursive call made in Step~\ref{alg:ALLrec}.6b. Therefore the only time we can possibly return from a call to Algorithm~\ref{alg:ALLrec} without adding a solution to $AllSolutions$ (in Steps \ref{alg:ALLrec}.1 or \ref{alg:ALLrec}.4) is if no solutions exist to our original problem in which case there will be only one call to Algorithm~\ref{alg:ALLrec} (from Algorithm~\ref{alg:ALLmain}).
If $M_{\Workers}=M_{\Firms}$ in Step \ref{alg:ALLrec}.5, then $M_{\Workers}$ is the unique feasible solution in the current call to Algorithm~\ref{alg:ALLrec}.

We will now prove that Algorithm~\ref{alg:ALLmain} returns all stable matchings containing all pairs in $V^{in}$ and not containing any pairs in $V^{out}$ 
in polynomial time per solution. We first prove the correctness of the algorithm before proving its time complexity.
For both results we need Lemma~\ref{lem:NotIn} below that is proved in in Appendix \ref{app:proofs}.
Before stating the lemma, we introduce some notation.
If $D_i$ is a subdigraph of $D^*(P)$ then let $d^+_{\Workers,i}(x)$ {($d^+_{\Firms,i}(x)$, respectively)} denote the number of arcs out of $x \in V(D_i)$ 
in $A_{\Workers}$ {($A_{\Firms}$, respectively)} that are also present in $D_i$.

\begin{lemma} \label{lem:NotIn}
Let $D_i$ be any digraph obtained by running IDUA on some instance of the one-to-one two-sided matching market.
Assume that $D_i$ has $r$ rows and $r$ columns (and therefore, by Lemma~\ref{lemma:lem1}, all stable matchings are perfect matchings of size $r$).
Let $v \in V(D_i)$ be arbitrary, such that $d_{\Workers,i}^+(v)=0$ {or $d_{\Firms,i}^+(v)=0$} in $D_i$.
Then, the following two conditions are equivalent.
\begin{enumerate}[label=(\roman*)]
\item
$M$ is a stable matching in $D_i$ not containing $v$.

\item
$M$ is a stable matching in $D_i - v$.
\end{enumerate}
\end{lemma}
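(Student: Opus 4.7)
The plan is to treat the two directions separately. The forward direction---that every stable matching $M$ in $D_i$ with $v\notin M$ is a stable matching in $D_i-v$---is essentially book-keeping. Since $v\notin M$, the set $M$ lies inside $V(D_i-v)$; independence is preserved because passing to the subdigraph only removes arcs; and for every $u\in V(D_i-v)\setminus M$, the out-neighbour of $u$ in $M$ guaranteed by the kernel property of $M$ in $D_i$ cannot be $v$ (as $v\notin M$), so the out-neighbour, together with the arc into it, still belongs to $D_i-v$.

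The substantive direction is the converse. Suppose $M$ is a stable matching in $D_i-v$ with $|M|=r$. I will verify that $M$ is a stable matching in $D_i$ not containing $v$. The membership $v\notin M$ is immediate since $v\notin V(D_i-v)$, and the matching and domination requirements for every vertex $u\ne v$ transfer from $D_i-v$ to $D_i$ exactly as above. The only nontrivial obligation will be to exhibit an out-neighbour of $v$ that belongs to $M$. Without loss of generality I will assume $d^+_{\Workers,i}(v)=0$ (the case $d^+_{\Firms,i}(v)=0$ is symmetric by interchanging rows and columns) and write $v=(w,f)$.

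Under this assumption, the out-neighbours of $v$ in $D_i$ are exactly the vertices $(w',f)$ in column $f$ with $f$ preferring $w'$ to $w$, via arcs in $A_{\Firms}$, so I must find such a vertex in $M$. The key is that $D_i$ is a fixed point of IDUA: applying rule (ii) of Definition~\ref{def:reduce} at $v$ would delete every vertex $(w',f)$ with $f$ preferring $w$ to $w'$, and since no further reduction is possible in $D_i$, none of these vertices survive. Consequently every vertex of column $f$ in $D_i$ other than $v$ itself represents a worker whom $f$ strictly prefers to $w$. Because $|M|=r$ and $M$ is a matching in the grid $D_i-v$, the digraph $D_i-v$ must retain $r$ non-empty rows and $r$ non-empty columns; in particular, column $f$ remains non-empty after deleting $v$. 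The unique vertex $(u,f)\in M$ in column $f$ is therefore one with $f$ strictly preferring $u$ to $w$, which supplies the required out-neighbour of $v$ inside $M$ via an $A_{\Firms}$ arc.

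The main obstacle I anticipate is precisely the step just handled. Without IDUA-stability of $D_i$ one could imagine a kernel $M$ of $D_i-v$ in which $f$'s $M$-partner is a worker that $f$ ranks below $w$; reinserting $v$ would then leave $v$ undominated. The role of the IDUA hypothesis is to rule out exactly these configurations in column $f$, and spotting that rule (ii) does so is the crux of the argument. Everything else is routine reasoning about kernels and matching sizes in a bipartite grid.
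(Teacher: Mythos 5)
Your proof is correct and follows essentially the same route as the paper's: the forward direction is the routine transfer of the kernel property to the subdigraph, and the substantive converse direction hinges, exactly as in the paper, on the observation that because $D_i$ is a fixed point of IDUA and $d^+_{\Workers,i}(v)=0$, rule (ii) of Definition~\ref{def:reduce} has already eliminated every vertex in $v$'s column that the firm ranks below $w$, so the column-$f$ vertex forced into $M$ by $|M|=r$ is an out-neighbour of $v$. The only cosmetic difference is that the paper phrases this as ``$v$ has no incoming $A_{\Firms}$ arcs'' while you phrase it as ``every surviving column-$f$ vertex is an out-neighbour of $v$''; these are equivalent.
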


Let us now provide intuition for Lemma~\ref{lem:NotIn} and show how it sheds further light on the structure of a matching market.
We then explain its importance for the algorithm.

Suppose first that $D_{i}$ in the statement of the lemma represents the normal form, and recall that the normal form is a balanced market where the size of each stable matching is given by $r$.
Now, consider some pair $v = (\worker{}, \firm{}) \in D_{i}$ that (i) is forbidden to be part of any stable matching, and also (ii) has out degree zero in one direction but not the other. 
By Lemma~\ref{lemma:extremalMatches}, $v$ is contained in the optimal stable matching for one party, say worker $\worker{}$.
Moreover, due to the tension in preferences over stable matchings that exists between the two sides, $v$ must also be contained in the least desirable stable matchings for $\firm{}$ since $\worker{}$ is $\firm{}$'s least preferred worker in the normal form.
The key insight of Lemma~\ref{lem:NotIn} is that $v$ cannot be a blocking pair.
(In fact, no extreme pair can ever be blocking.)
To see why, consider a matching in the normal form in which $\firm{}$ is paired with some other worker, say $\worker{}' \neq \worker{}$.
It must be that $\firm{}$ prefers $\worker{}'$ over $\worker{}$, and so $\firm{}$ would never leave the pair $(\worker{}', \firm{})$ for the pair $(\worker{}, \firm{})$.

Given that extreme pair $v = (\worker{}, \firm{})$ cannot be blocking against any matching in the normal form, it can be removed without impacting the stable matchings that $v$ is not a part of.
That is, while deleting $v$ will certainly remove some matchings, it will only remove those stable matchings that contain $v$ and these are stable matchings that we want to discard.
Since $v$ was the optimal feasible pair for $\worker{}$, with $v$ removed it must be that $\worker{}$ has a new favourite firm.
Given this, it is possible that the new market can be reduced further by running IDUA again (and recall that running IDUA never changes the set of stable matchings).
Lastly, note that $D_i$ need not refer only to the normal form, so this procedure can be used ad infinitum.

Armed with Lemma~\ref{lem:NotIn} we can now state the main theorem, the proof of which is found in Appendix \ref{app:proofs}.
Before stating the theorem, we introduce some notation.
Recall that $m$ is the number of workers and $n$ the number of positions in $P$.
Let $N=\max\{m,n\}$ and let $M=\min\{m,n\}$.
We denote by $s$ the number of stable matchings outputted by the algorithm.

\begin{theorem}\label{thm:ALLmain}
Algorithm \ref{alg:ALLmain} correctly solves {\sc Reduced Constrained Stable Matchings} and does so in time $O(N^3 + N^3s)$.  Moreover, the time between two consecutive stable matchings outputted by the algorithm is $O(N^3)$.
\end{theorem}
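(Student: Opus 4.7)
The plan is to establish correctness first and then the running-time bound, with Lemma~\ref{lem:NotIn} as the workhorse. The central invariant I would carry through the recursion of Algorithm~\ref{alg:ALLrec} is that at the start of every call with arguments $(D, V^{in}, V^{out})$, the digraph $D$ is obtained from $D^*(P)$ by a sequence of ``delete-an-extremal-forbidden-vertex-and-rerun-IDUA'' operations, and the size-$r$ stable matchings of $D$ containing $V^{in}$ and avoiding $V^{out}$ are exactly the valid solutions flowing through this recursive call. Lemma~\ref{lem:NotIn} makes Step~\ref{alg:ALLrec}.2 invariant-preserving, and Lemma~\ref{lemma:lem1} justifies the size comparison in Step~\ref{alg:ALLrec}.4.

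Next I would prove soundness by observing that after Step~\ref{alg:ALLrec}.2 every surviving vertex in $V^{out}$ satisfies $d^+_{\Workers}\ge 1$ and $d^+_{\Firms}\ge 1$, so the matchings $M_{\Workers}$ and $M_{\Firms}$ computed in Step~\ref{alg:ALLrec}.3 are automatically disjoint from $V^{out}$. Hence the matching output in Step~\ref{alg:ALLrec}.5 avoids $V^{out}$; and since Step~\ref{alg:ALLrec}.4 guarantees $V^{in}\subseteq V(D)$ together with $|M_{\Workers}|=r$, the coincidence $M_{\Workers}=M_{\Firms}$ identifies it as the unique stable matching of the current $D$, which by the invariant must contain $V^{in}$. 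Completeness I would handle by induction on recursion depth: any valid solution $M$ either coincides with $M_{\Workers}$ in Step~\ref{alg:ALLrec}.5, or, for the splitting vertex $v\in M_{\Workers}\setminus M_{\Firms}$ chosen in Step~\ref{alg:ALLrec}.6, falls into branch 6a or 6b according to whether $v\in M$ or $v\notin M$; in either case Lemma~\ref{lem:NotIn} preserves $M$ in the appropriate reduced digraph.

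The hard part is the structural claim that each branch in Step~\ref{alg:ALLrec}.6 yields at least one output, since this is what controls the size of the recursion tree. For branch 6a I would show that $M_{\Workers}$ itself is a valid candidate: it contains $v$ by construction, and the only vertices deleted when $v$ is added to $V^{in}$ lie in $v$'s row or column other than $v$ itself, none of which appear in $M_{\Workers}$ (which picks exactly one vertex per row and per column). By Lemma~\ref{lem:NotIn}, $M_{\Workers}$ therefore remains a size-$r$ stable matching in the further-reduced digraph, and an induction on recursion depth delivers it as an output at some descendant leaf. A symmetric argument using $M_{\Firms}$ handles branch 6b, so the recursion tree is a binary tree with exactly $s$ leaves and hence at most $2s-1$ nodes in total.

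For the time bound, I would cite that IDUA can be implemented in $O(N^3)$ time, and verify that the bookkeeping at each recursive call -- identifying extremal vertices of $V^{out}$, computing $M_{\Workers}$ and $M_{\Firms}$ from the current out-degrees, and the various containment checks -- is also $O(N^3)$. Each of the $O(s)$ nodes of the recursion tree thus costs $O(N^3)$, and adding the one-off IDUA of Step~\ref{alg:ALLmain}.1 gives total time $O(N^3+N^3s)$. For the per-solution delay, I would organise the recursion as a depth-first traversal so that each node is entered only once; since every node is guaranteed to lie above at least one output, charging its $O(N^3)$ cost to the next output encountered in the DFS gives the claimed $O(N^3)$ delay between consecutive outputs.
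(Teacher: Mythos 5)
Your proposal is correct and follows essentially the same route as the paper: the same use of Lemma~\ref{lem:NotIn} and Lemma~\ref{lemma:lem1} for correctness, the same key structural observation that $M_{\Workers}$ (resp.\ $M_{\Firms}$) survives into branch 6a (resp.\ 6b) so that every non-root leaf outputs a solution and the recursion tree has $O(s)$ nodes, and the same $O(N^3)$ per-node accounting. The only slip is the claim of ``exactly $s$ leaves and at most $2s-1$ nodes,'' which fails when $s=0$ (the root is then a non-producing leaf); the paper's bound $2s+1$ handles this, and your standalone $O(N^3)$ term absorbs it anyway.
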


We conclude this section by recalling that if a stable matching satisfying the constraints stipulated by {\sc Reduced Constrained Stable Matchings} exists, then Algorithm \ref{alg:ALLmain} will not only confirm but will output {\it all} of them.
Of course it is possible that a market designer might be content with a single stable matching satisfying the constraints.
This is not a problem.
It is straightforward to modify our algorithm in order to output either side's optimal stable matching satisfying the constraints.
{One simply has to output $M_{\Workers}$ or $M_{\Firms}$ in the first call to Algorithm~\ref{alg:ALLrec} instead of making recursive calls in Step~\ref{alg:ALLrec}.6.}

\subsection{The algorithm:\ an example}\label{sec:Overview}

Let us now illustrate how our algorithm works.
To do so we use the two-sided market from Reduced Example~\ref{exR:initial} and the market designer's Reduced Question} \ref{q:QuestionR} pertaining to it.

The input to Algorithm~\ref{alg:ALLmain} will be the preference lists from Reduced Example~\ref{exR:initial}.
After the reduction to the one-to-one environment has been performed this is equivalent to the digraph $D(P)$ in Figure~\ref{fig:ex1}. The variant of {\sc Reduced Constrained Stable Matchings} submitted to Algorithm~\ref{alg:ALLrec} is {Reduced Question} \ref{q:QuestionR}:\ decide if there exists a stable matching in $D(P)$ containing the vertex $(\worker{1},\firm{2})$ and also avoiding the vertex $(\worker{4},\firm{1})$.
The output will be the empty set if there is no such stable matching and all such stable matchings in the event that one exists.

Step~\ref{alg:ALLmain}.1 of  Algorithm~\ref{alg:ALLmain} runs IDUA to generate the normal form.
The result of this is depicted in Figure~\ref{fig:ex2} where the colour coding of vertices is as follows:\ vertices are coloured {green} if they are required to be in the stable matching and coloured {red} if required not to be. 
Once the IDUA procedure terminates and the normal form $D^*(P)$ has been arrived at, Algorithm~\ref{alg:ALLmain} then calls Algorithm~\ref{alg:ALLrec}.

The first step of Algorithm~\ref{alg:ALLrec}, Step~\ref{alg:ALLrec}.1, makes two stipulations.
In the first, it checks if any of the  \WithColor{green}\WithoutColor{white} vertices were deleted when applying IDUA. If so, then there is no stable matching satisfying the requirements and the algorithm exits.
Note that upon running IDUA for the example, vertex $(1,2)$ in $D(P)$ in the left hand panel of Figure~\ref{fig:ex2} remains in $D^*(P)$ in the right hand panel.
We observe that two of the red vertices, vertices $(6,4)$ and $(6,5)$, that were required not to be part of a desired stable matching are no longer present in $D^*(P)$ in the right hand panel of Figure~\ref{fig:ex2}.
This simply confirms that these vertices are not part of any stable matching, and so a constraint requiring that neither be contained in a particular stable matching is rendered redundant. To say it another way, if red vertices are deleted in going from $D(P)$ to $D^*(P)$, the algorithm proceeds unaffected.

Step~\ref{alg:ALLrec}.1 also converts all queries about required pairs (green vertices) into corresponding queries about avoided pairs (red vertices). Since the green vertex, $(1, 2)$, remains after applying IDUA, Lemma~\ref{lemma:lem1} confirms that worker $\worker{1}$ is employed in every stable matching and that the single position at $\firm{2}$ is filled in every stable matching. But given this, the query about worker $\worker{1}$ being employed at firm $\firm{2}$ can be converted into a collection of equivalent queries about worker $\worker{1}$ not being employed at any firm other than $\firm{2}$ and firm $\firm{2}$'s position not being filled by any worker other than $\worker{1}$.
Clearly any stable matching that avoids all these pairs must contain the pair $(\worker{1}, \firm{2})$.

Thus the problem of determining if there exists a stable matching in which worker $\worker{1}$ is employed at firm $\firm{2}$ is equivalent to one of determining if there is a stable matching in the normal form that does not include vertices $(1,1), (1,3), (1,5)$ and vertices $(2,2), (3,2), (4,2)$, where the first collection of three vertices require that $\worker{1}$ must not be employed at $\firm{1}, \firm{3}$, or $\firm{5}$ and the second collection of three vertices require that the position at $\firm{2}$ is not filled by $\worker{2}, \worker{3}$, or $\worker{4}$.
Panel (a) in Figure~\ref{fig:ex3} displays the result of converting the query about the required (green) vertex $(1,2)$ in $D^*(P)$, shown in panel (b) of Figure~\ref{fig:ex2}, to an analogous query about (red) vertices being avoided in $D^*(P)$.

\begin{figure}[h!]
\begin{center}
\fbox{
\begin{minipage}{5.8cm}
\tikzstyle{vertexDOT}=[scale=0.12,circle,draw,fill]
\tikzstyle{vertexY}=[circle,draw, top color=gray!10, bottom color=gray!40, minimum size=11pt, scale=0.38, inner sep=0.99pt]

\WithColor{\tikzstyle{vertexIn}=[circle,draw, top color=green!20, bottom color=green!60, minimum size=11pt, scale=0.38, inner sep=0.99pt]}
\WithoutColor{\tikzstyle{vertexIn}=[circle,draw, top color=gray!1, bottom color=gray!1, minimum size=21pt, scale=0.38, inner sep=2.99pt]}

\WithColor{\tikzstyle{vertexOut}=[circle,draw, top color=red!20, bottom color=red!60, minimum size=11pt, scale=0.38, inner sep=0.99pt]}
\tikzstyle{vertexPivot}=[circle,draw, top color=black!40, bottom color=black!80, minimum size=11pt, scale=0.38, inner sep=0.99pt]
\tikzstyle{vertexDelete}=[circle,dotted, draw, top color=gray!1, bottom color=gray!1, minimum size=11pt, scale=0.3, inner sep=0.99pt]
\WithoutColor{\tikzstyle{vertexOut}=[circle,draw, top color=gray!99, bottom color=gray!99, minimum size=21pt, scale=0.38, inner sep=2.99pt]}
\begin{tikzpicture}[scale=0.67]
\node (x11) at (1,11) [vertexOut] {$(1,1)$};
\node (x12) at (3,11) [vertexIn] {$(1,2)$};
\node (x13) at (5,11) [vertexOut] {$(1,3)$};
\node (x15) at (9,11) [vertexOut] {$(1,5)$};

\node (x21) at (1,9) [vertexY] {$(2,1)$};
\node (x22) at (3,9) [vertexOut] {$(2,2)$};
\node (x23) at (5,9) [vertexY] {$(2,3)$};
\node (x25) at (9,9) [vertexY] {$(2,5)$};

\node (x31) at (1,7) [vertexY] {$(3,1)$};
\node (x32) at (3,7) [vertexOut] {$(3,2)$};
\node (x33) at (5,7) [vertexY] {$(3,3)$};
\node (x35) at (9,7) [vertexY] {$(3,5)$};

\node (x41) at (1,5) [vertexOut] {$(4,1)$};
\node (x42) at (3,5) [vertexOut] {$(4,2)$};
\node (x43) at (5,5) [vertexY] {$(4,3)$};
\node (x45) at (9,5) [vertexY] {$(4,5)$};

\node (x54) at (7,3) [vertexY] {$(5,4)$};

\draw [->,line width=0.03cm] (x15) -- (x13);
\draw [->,line width=0.03cm] (x13) -- (x12);
\draw [->,line width=0.03cm] (x12) -- (x11);

\draw [->,line width=0.03cm] (x23) -- (x25);
\draw [->,line width=0.03cm] (x25) to [out=200,in=340] (x21);
\draw [->,line width=0.03cm] (x21) -- (x22);

\draw [->,line width=0.03cm] (x32) -- (x31);
\draw [->,line width=0.03cm] (x31) to [out=20,in=160] (x35);
\draw [->,line width=0.03cm] (x35) -- (x33);

\draw [->,line width=0.03cm] (x41) -- (x42);
\draw [->,line width=0.03cm] (x42) -- (x43);
\draw [->,line width=0.03cm] (x43) -- (x45);

\draw [->,line width=0.03cm] (x11) -- (x21);
\draw [->,line width=0.03cm] (x21) -- (x31);
\draw [->,line width=0.03cm] (x31) -- (x41);

\draw [->,line width=0.03cm] (x22) -- (x12);
\draw [->,line width=0.03cm] (x12) to [out=250,in=110] (x42);
\draw [->,line width=0.03cm] (x42) -- (x32);

\draw [->,line width=0.03cm] (x33) -- (x43);
\draw [->,line width=0.03cm] (x43) to [out=70,in=290] (x13);
\draw [->,line width=0.03cm] (x13) -- (x23);

\draw [->,line width=0.03cm] (x45) -- (x35);
\draw [->,line width=0.03cm] (x35) -- (x25);
\draw [->,line width=0.03cm] (x25) -- (x15);
\node [scale=0.7] at (5.0,1.5) {(a) $D^{*}(P)$ after converting queries about};
\node [scale=0.7] at (5.0,1.0) { $V^{in}$ (green) into queries about $V^{out}$ (red).};

\end{tikzpicture}
\end{minipage} } \hspace{0.2cm} \fbox{
\begin{minipage}{5.8cm}
\tikzstyle{vertexDOT}=[scale=0.12,circle,draw,fill]
\tikzstyle{vertexY}=[circle,draw, top color=gray!10, bottom color=gray!35, minimum size=11pt, scale=0.38, inner sep=0.99pt]

\WithColor{\tikzstyle{vertexIn}=[circle,draw, top color=green!20, bottom color=green!60, minimum size=11pt, scale=0.38, inner sep=0.99pt]}
\WithoutColor{\tikzstyle{vertexIn}=[circle,draw, top color=gray!1, bottom color=gray!1, minimum size=21pt, scale=0.38, inner sep=2.99pt]}

\WithColor{\tikzstyle{vertexOut}=[circle,draw, top color=red!20, bottom color=red!60, minimum size=11pt, scale=0.38, inner sep=0.99pt]}
\tikzstyle{vertexPivot}=[circle,draw, top color=black!40, bottom color=black!80, minimum size=11pt, scale=0.38, inner sep=0.99pt]
\tikzstyle{vertexDelete}=[circle,dotted, draw, top color=gray!1, bottom color=gray!1, minimum size=11pt, scale=0.3, inner sep=0.99pt]
\WithoutColor{\tikzstyle{vertexOut}=[circle,draw, top color=gray!99, bottom color=gray!99, minimum size=21pt, scale=0.38, inner sep=2.99pt]}

\begin{tikzpicture}[scale=0.67]
\node (x12) at (3,11) [vertexIn] {$(1,2)$};

\node (x21) at (1,9) [vertexY] {$(2,1)$};
\node (x25) at (9,9) [vertexY] {$(2,5)$};

\node (x31) at (1,7) [vertexY] {$(3,1)$};
\node (x33) at (5,7) [vertexY] {$(3,3)$};
\node (x35) at (9,7) [vertexY] {$(3,5)$};

\node (x43) at (5,5) [vertexY] {$(4,3)$};
\node (x45) at (9,5) [vertexY] {$(4,5)$};

\node (x54) at (7,3) [vertexY] {$(5,4)$};


\draw [->,line width=0.03cm] (x25) -- (x21);

\draw [->,line width=0.03cm] (x31) to [out=20,in=160] (x35);
\draw [->,line width=0.03cm] (x35) -- (x33);

\draw [->,line width=0.03cm] (x43) -- (x45);

\draw [->,line width=0.03cm] (x21) -- (x31);


\draw [->,line width=0.03cm] (x33) -- (x43);

\draw [->,line width=0.03cm] (x45) -- (x35);
\draw [->,line width=0.03cm] (x35) -- (x25);

\node [scale=0.7] at (5.0,1.5) {(b) Result of applying Step 2.2.};
\node [scale=0.7] at (5.0,1.0) {Note that no red vertices remain.};

\end{tikzpicture}
\end{minipage} }

\end{center}

\caption{In (a) we depict $D^*(P)$ having converted the problem into one of avoiding vertices. In (b) we depict the submatching problem that is arrived at after completing Step~\ref{alg:ALLrec}.2 in the first call to Algorithm~\ref{alg:ALLrec}.}
\label{fig:ex3}
\end{figure}
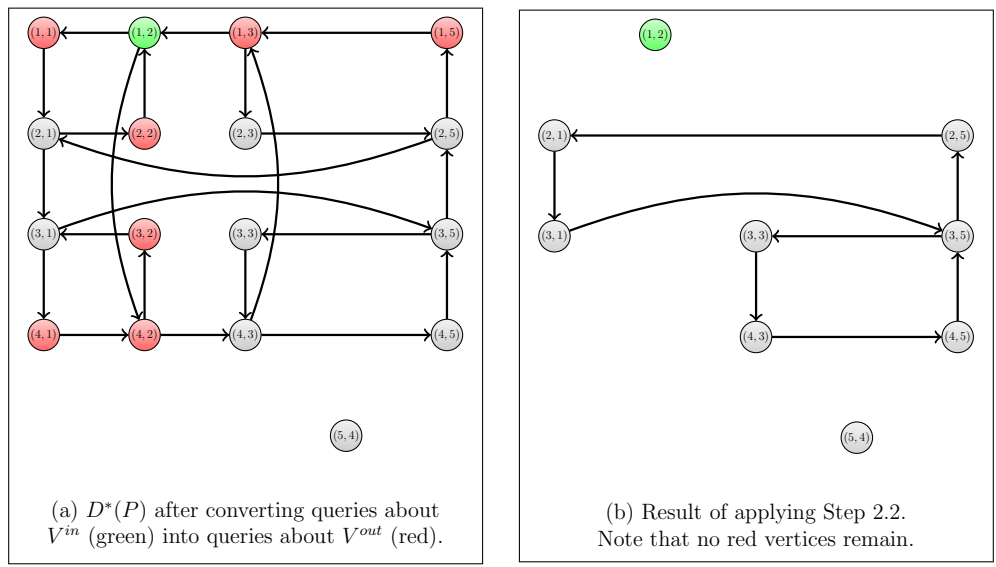

Step~\ref{alg:ALLrec}.2 is one of the key steps.
It states that whenever there is a red vertex with either horizontal out-degree zero or vertical out-degree zero, that this vertex is deleted from the digraph and the IDUA procedure is run on the subdigraph that remains.
Recall that deleting red vertices with either horizontal out-degree zero or vertical out-degree zero is justified by Lemma~\ref{lem:NotIn}.

In our example, the vertices satisfying this requirement can be seen in panel (a) of Figure~\ref{fig:ex3}.
They are $(1,1), (2,2), (1,5), (3,2)$, and $(4,1)$, where the first two vertices have horizontal out-degree equal to zero and the latter three vertices have vertical out-degree equal to zero.
While any of the vertices satisfying this constraint can be selected at point Step~\ref{alg:ALLrec}.2 -- the order is immaterial -- we note that not all red vertices need satisfy the constraint.
For our example, from panel (a) of Figure~\ref{fig:ex3} it is clear that neither vertex $(1,3)$ nor vertex $(4,2)$ satisfy the requirement as both have positive horizontal out-degree and positive vertical out-degree.
We emphasise this point strongly as it is not possble simply to run IDUA on $D^* - V^{out}$.\footnote{In Appendix \ref{app:exampleDelete} we present an example that shows that deleting all vertices in $V^{out}$ and then applying IDUA may yield an incorrect conclusion.}

Panel (b) of Figure~\ref{fig:ex3} depicts the submatching problem that is arrived at after completing Step~\ref{alg:ALLrec}.2 in the first call to Algorithm~\ref{alg:ALLrec}.
By definition there will be no red vertices with out-degree zero remaining.
(Though we recall that red vertices with positive out-degree could in theory remain, albeit none did in this example. That is, the vertices $(1, 3)$ and $(4,2)$ were removed when the IDUA procedure was applied to some reduced subdigraph; they were not directly called by Step~\ref{alg:ALLrec}.2.)
Moreover, this step can also delete other vertices. For example, in going from panel (a) to panel (b) in Figure~\ref{fig:ex3}, we note that red vertices $(1,3)$ and $(4,2)$ and the non-red vertex $(2,3)$ were all deleted by Step~\ref{alg:ALLrec}.2.
In the proof of correctness of Algorithm~\ref{alg:ALLrec} in Theorem \ref{thm:ALLmain} we show that the order in which red vertices satisfying the requirement of Step~\ref{alg:ALLrec}.2 are deleted and then apply IDUA does not impact the final outcome.

Step~\ref{alg:ALLrec}.3 now collects the worker-optimal pairs for every worker, $M_{\Workers}$, and the firm-optimal pairs for every firm, $M_{\Firms}$. Step~\ref{alg:ALLrec}.4 checks to see that both $M_{\Workers}$ and $M_{\Firms}$ are the size of a stable matching.
If not, then there cannot be a stable matching satisfying the constraints and the Algorithm states as much.
If both $M_{\Workers}$ and $M_{\Firms}$ are the size of a stable matching, then Step~\ref{alg:ALLrec}.5 checks to see if they are one and the same.
If they are, that is if $M_{\Workers} = M_{\Firms}$, then there is exactly one stable matching satisfying the constraints and the algorithm returns it.

If $M_{\Workers} \neq M_{\Firms}$, then we can make two observations.
First, there are multiple (at least two) stable matchings satisfying the constraints.
Moreover, given our result on uniqueness in \cite{GutinNeary:2023:GEB}, there must be some cycles in the remaining digraph because cycles that cannot be broken by the IDUA procedure are the barrier to uniqueness.
We observe that there are two cycles of length 4 in panel (b) of Figure~\ref{fig:ex3} and Figure~\ref{fig:extremeConstrained}.
Second, it must be that $M_{\Workers}$ ($M_{\Firms}$) is the worker-optimal (firm-optimal) stable matching satisfying the constraints.
For the market of Reduced Example~\ref{exR:initial} we depict $M_{\Workers}$ and $M_{\Firms}$ in Figure~\ref{fig:extremeConstrained} below.
We adopt the same colour coding as that in Figure~\ref{fig:exExtremeMatchings}.
That is, vertices solely in the worker-optimal stable matching satisfying the constraints are coloured yellow, vertices solely in the firm-optimal stable matching satisfying the constraints are coloured blue, and vertices in both are coloured a hybrid of yellow and blue.

\begin{figure}[h!]
\begin{center}

\tikzstyle{vertexDOT}=[scale=0.23,circle,draw,fill]
\tikzstyle{vertexY}=[circle,draw, top color=gray!10, bottom color=gray!40, minimum size=11pt, scale=0.5, inner sep=0.99pt]

\WithColor{\tikzstyle{vertexZ}=[circle,draw, top color=\matchingC{}!20, bottom color=\matchingC{}!50, minimum size=11pt, scale=0.5, inner sep=0.99pt]} 
\tikzstyle{vertexWF}=[circle,draw, top color=blue!70, bottom color=yellow!100, minimum size=11pt, scale=0.5, inner sep=0.99pt]
\tikzstyle{vertexF}=[circle,draw, top color=blue!50, bottom color=blue!50, minimum size=11pt, scale=0.5, inner sep=0.99pt]
\WithoutColor{\tikzstyle{vertexZ}=[rectangle,draw, top color=gray!1, bottom color=gray!1, minimum size=21pt, scale=0.6, inner sep=2.99pt]}  

\tikzstyle{vertexW}=[circle,draw, top color=gray!1, bottom color=gray!1, minimum size=11pt, scale=0.4, inner sep=0.99pt]
\tikzstyle{vertexQ}=[circle,draw, top color=black!99, bottom color=black!99, minimum size=11pt, scale=0.4, inner sep=0.99pt]


%
%

\begin{tikzpicture}[scale=0.8]

\node [scale=0.9] at (-1,11) {$w_1$};
\node [scale=0.9] at (-1,9) {$w_2$};
\node [scale=0.9] at (-1,7) {$w_3$};
\node [scale=0.9] at (-1,5) {$w_4$};
\node [scale=0.9] at (-1,3) {$w_5$};

\node [scale=0.9] at (1,12.5) {$f_1$};
\node [scale=0.9] at (3,12.5) {$f_2$};
\node [scale=0.9] at (5,12.5) {$f_3$};
\node [scale=0.9] at (7,12.5) {$f_4^1$};
\node [scale=0.9] at (9,12.5) {$f_4^2$};
\node (x12) at (3,11) [vertexWF] {$(1,2)$};

\node (x21) at (1,9) [vertexZ] {$(2,1)$};
\node (x25) at (9,9) [vertexF] {$(2,5)$};

\node (x31) at (1,7) [vertexF] {$(3,1)$};
\node (x33) at (5,7) [vertexZ] {$(3,3)$};
\node (x35) at (9,7) [vertexY] {$(3,5)$};

\node (x43) at (5,5) [vertexF] {$(4,3)$};
\node (x45) at (9,5) [vertexZ] {$(4,5)$};

\node (x54) at (7,3) [vertexWF] {$(5,4)$};


\draw [->,line width=0.03cm] (x25) -- (x21);

\draw [->,line width=0.03cm] (x31) to [out=20,in=160] (x35);
\draw [->,line width=0.03cm] (x35) -- (x33);

\draw [->,line width=0.03cm] (x43) -- (x45);

\draw [->,line width=0.03cm] (x21) -- (x31);


\draw [->,line width=0.03cm] (x33) -- (x43);

\draw [->,line width=0.03cm] (x45) -- (x35);
\draw [->,line width=0.03cm] (x35) -- (x25);


\end{tikzpicture}
\end{center}
\caption{The two extremal stable matchings satisfying the constraints.}
\label{fig:extremeConstrained}
\end{figure}
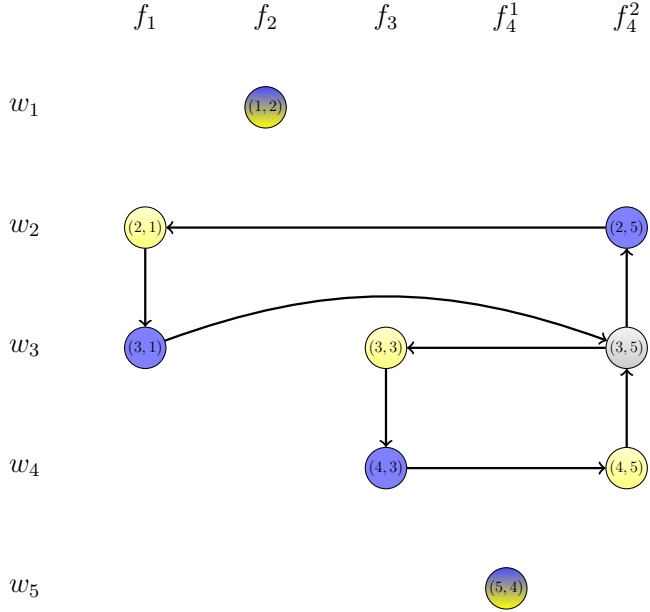

Since $M_{\Workers{}} \neq M_{\Firms{}}$, as seen in Figure~\ref{fig:extremeConstrained}, then Step~\ref{alg:ALLrec}.6 of the algorithm is called. 
The first thing this step does is to identify {a vertex, $(\worker{}, \firm{})$,  that is contained in $M_{\Workers{}}$ but not in $M_{\Firms{}}$.} 
(Given that $M_{\Workers{}} \neq M_{\Firms{}}$ there must be at least one such vertex, and if there is more than one such vertex which is selected is immaterial.) 
We now run the recursive part of Algorithm~\ref{alg:ALLrec} on two subcases:\ one where $(\worker{}, \firm{})$ is required to be included (so that $(\worker{}, \firm{})$ is added to $V^{in}$) and the other where it is required not to be included (and hence $(\worker{}, \firm{})$ is contained in $V^{out}$).
As can be seen from Figure~\ref{fig:extremeConstrained}, running left to right and down the page the first vertex we encounter of this form is the vertex $(2, 1)$.
Thus we now run Algorithm~\ref{alg:ALLrec} on $(D, V^{in} \cup (2,1), V^{out})$ and on $(D, V^{in}, V^{out} \cup (2,1))$.

This repeated splitting into two subcases generates a binary search-tree that we depict in Figure~\ref{fig:SearchTree}. There are five pairs of panels in Figure~\ref{fig:SearchTree} labelled (a) through (e).
In every pair of panels, the left panel depicts the temporary constraints while the right hand panel depicts the extremal matchings associated with the temporary constraints.
A path through the tree ends in a pair of panels that provide a stable matching satisfying the constraints. 
Paths through the tree can be of different length (as indeed they are in Figure~\ref{fig:SearchTree}).

\begin{figure}[h!]

\tikzstyle{vertexDOT}=[scale=0.12,circle,draw,fill]
\tikzstyle{vertexY}=[circle,draw, top color=gray!20, bottom color=gray!45, minimum size=11pt, scale=0.2, inner sep=0.99pt]

\WithColor{\tikzstyle{vertexIn}=[circle,draw, top color=green!20, bottom color=green!60, minimum size=11pt, scale=0.2, inner sep=0.99pt]}
\WithoutColor{\tikzstyle{vertexIn}=[circle,draw, top color=gray!1, bottom color=gray!1, minimum size=21pt, scale=0.2, inner sep=2.99pt]}

\WithColor{\tikzstyle{vertexOut}=[circle,draw, top color=red!20, bottom color=red!60, minimum size=11pt, scale=0.2, inner sep=0.99pt]}
\WithoutColor{\tikzstyle{vertexOut}=[circle,draw, top color=gray!99, bottom color=gray!99, minimum size=21pt, scale=0.2, inner sep=2.99pt]}

\WithColor{\tikzstyle{vertexWF}=[circle,draw, top color=blue!70, bottom color=yellow!100, minimum size=11pt, scale=0.2, inner sep=0.99pt]}
\WithoutColor{\tikzstyle{vertexWF}=[circle,draw, top color=black!70, bottom color=white!100, minimum size=21pt, scale=0.2, inner sep=2.99pt]}

\WithColor{\tikzstyle{vertexWW}=[circle,draw, top color=yellow!100, bottom color=yellow!50, minimum size=11pt, scale=0.2, inner sep=0.99pt]}
\WithoutColor{\tikzstyle{vertexWW}=[circle,draw, top color=black!100, bottom color=black!100, minimum size=21pt, scale=0.2, inner sep=2.99pt]}

\WithColor{\tikzstyle{vertexFF}=[circle,draw, top color=blue!100, bottom color=blue!50, minimum size=11pt, scale=0.2, inner sep=0.99pt]}
\WithoutColor{\tikzstyle{vertexFF}=[circle,draw, top color=white!100, bottom color=white!100, minimum size=21pt, scale=0.2, inner sep=2.99pt]}

\begin{tikzpicture}[scale=0.38]

\node [scale=1.0] at (13,51.5) {(a)};

\node (x12b) at (12,50) [vertexIn] {$(1,2)$};

\node (x21b) at (10,48) [vertexY] {$(2,1)$};
\node (x25b) at (16,48) [vertexY] {$(2,5)$};

\node (x31b) at (10,46) [vertexY] {$(3,1)$};
\node (x33b) at (14,46) [vertexY] {$(3,3)$};
\node (x35b) at (16,46) [vertexY] {$(3,5)$};

\node (x43b) at (14,44) [vertexY] {$(4,3)$};
\node (x45b) at (16,44) [vertexY] {$(4,5)$};

\node (x54b) at (15,43) [vertexY] {$(5,4)$};

\draw [->,line width=0.03cm] (x25b) -- (x21b);
\draw [->,line width=0.03cm] (x31b) to [out=20,in=160] (x35b);
\draw [->,line width=0.03cm] (x35b) -- (x33b);
\draw [->,line width=0.03cm] (x43b) -- (x45b);
\draw [->,line width=0.03cm] (x21b) -- (x31b);
\draw [->,line width=0.03cm] (x33b) -- (x43b);
\draw [->,line width=0.03cm] (x45b) -- (x35b);
\draw [->,line width=0.03cm] (x35b) -- (x25b);


\WithColor{ \node [scale=1.0] at (22,51.5) {{\color{yellow} $M_{\Workers}$} and {\color{blue} $M_{\Firms}$}}; }
\WithoutColor{ \node [scale=1.1] at (22,51.5) {$M_{\Workers}$ and $M_{\Firms}$}; }

\node (x12c) at (21,50) [vertexWF] {$(1,2)$};

\node (x21c) at (19,48) [vertexWW] {$(2,1)$};
\node (x25c) at (25,48) [vertexFF] {$(2,5)$};

\node (x31c) at (19,46) [vertexFF] {$(3,1)$};
\node (x33c) at (23,46) [vertexWW] {$(3,3)$};
\node (x35c) at (25,46) [vertexY] {$(3,5)$};

\node (x43c) at (23,44) [vertexFF] {$(4,3)$};
\node (x45c) at (25,44) [vertexWW] {$(4,5)$};

\node (x54c) at (24,43) [vertexWF] {$(5,4)$};

\draw [->,line width=0.03cm] (x25c) -- (x21c);
\draw [->,line width=0.03cm] (x31c) to [out=20,in=160] (x35c);
\draw [->,line width=0.03cm] (x35c) -- (x33c);
\draw [->,line width=0.03cm] (x43c) -- (x45c);
\draw [->,line width=0.03cm] (x21c) -- (x31c);
\draw [->,line width=0.03cm] (x33c) -- (x43c);
\draw [->,line width=0.03cm] (x45c) -- (x35c);
\draw [->,line width=0.03cm] (x35c) -- (x25c);

\draw [line width=0.03cm] (26,53) -- (26,41) -- (8.5,41) -- (8.5,53) -- (26,53);
\draw [line width=0.03cm] (8.5,53) -- (8.5,41);
\draw [line width=0.03cm] (17.5,53) -- (17.5,41);


\node [scale=1.0] at (4,34.5) {(b)};

\node (x12d) at (3,33) [vertexIn] {$(1,2)$};

\node (x21d) at (1,31) [vertexIn] {$(2,1)$};
\node (x25d) at (7,31) [vertexOut] {$(2,5)$};

\node (x31d) at (1,29) [vertexOut] {$(3,1)$};
\node (x33d) at (5,29) [vertexY] {$(3,3)$};
\node (x35d) at (7,29) [vertexY] {$(3,5)$};

\node (x43d) at (5,27) [vertexY] {$(4,3)$};
\node (x45d) at (7,27) [vertexY] {$(4,5)$};

\node (x54d) at (6,26) [vertexY] {$(5,4)$};

\draw [->,line width=0.03cm] (x25d) -- (x21d);
\draw [->,line width=0.03cm] (x31d) to [out=20,in=160] (x35d);
\draw [->,line width=0.03cm] (x35d) -- (x33d);
\draw [->,line width=0.03cm] (x43d) -- (x45d);
\draw [->,line width=0.03cm] (x21d) -- (x31d);
\draw [->,line width=0.03cm] (x33d) -- (x43d);
\draw [->,line width=0.03cm] (x45d) -- (x35d);
\draw [->,line width=0.03cm] (x35d) -- (x25d);


\WithColor{ \node [scale=1.0] at (12,34.5) {{\color{yellow} $M_{\Workers}$} and {\color{blue} $M_{\Firms}$}}; }
\WithoutColor{ \node [scale=1.1] at (12,34.5) {$M_{\Workers}$ and $M_{\Firms}$}; }

\node (x12e) at (11,33) [vertexWF] {$(1,2)$};

\node (x21e) at (9,31) [vertexWF] {$(2,1)$};

\node (x33e) at (13,29) [vertexWW] {$(3,3)$};
\node (x35e) at (15,29) [vertexFF] {$(3,5)$};

\node (x43e) at (13,27) [vertexFF] {$(4,3)$};
\node (x45e) at (15,27) [vertexWW] {$(4,5)$};

\node (x54e) at (14,26) [vertexWF] {$(5,4)$};

\draw [->,line width=0.03cm] (x35e) -- (x33e);
\draw [->,line width=0.03cm] (x43e) -- (x45e);
\draw [->,line width=0.03cm] (x33e) -- (x43e);
\draw [->,line width=0.03cm] (x45e) -- (x35e);

\draw [line width=0.03cm] (16,36) -- (16,24) -- (0,24) -- (0,36) -- (16,36);
\draw [line width=0.03cm] (8,36) -- (8,24);


\node [scale=1.0] at (24,34.5) {(c)};

\node (x12f) at (23,33) [vertexIn] {$(1,2)$};

\node (x21f) at (21,31) [vertexOut] {$(2,1)$};
\node (x25f) at (27,31) [vertexY] {$(2,5)$};

\node (x31f) at (21,29) [vertexY] {$(3,1)$};
\node (x33f) at (25,29) [vertexY] {$(3,3)$};
\node (x35f) at (27,29) [vertexY] {$(3,5)$};

\node (x43f) at (25,27) [vertexY] {$(4,3)$};
\node (x45f) at (27,27) [vertexY] {$(4,5)$};

\node (x54f) at (26,26) [vertexY] {$(5,4)$};

\draw [->,line width=0.03cm] (x25f) -- (x21f);
\draw [->,line width=0.03cm] (x31f) to [out=20,in=160] (x35f);
\draw [->,line width=0.03cm] (x35f) -- (x33f);
\draw [->,line width=0.03cm] (x43f) -- (x45f);
\draw [->,line width=0.03cm] (x21f) -- (x31f);
\draw [->,line width=0.03cm] (x33f) -- (x43f);
\draw [->,line width=0.03cm] (x45f) -- (x35f);
\draw [->,line width=0.03cm] (x35f) -- (x25f);


\WithColor{ \node [scale=1.0] at (32,34.5) {{\color{yellow} $M_{\Workers}$} and {\color{blue} $M_{\Firms}$}}; }
\WithoutColor{ \node [scale=1.1] at (32,34.5) {$M_{\Workers}$ and $M_{\Firms}$}; }

\node (x12g) at (31,33) [vertexWF] {$(1,2)$};

\node (x25g) at (35,31) [vertexWF] {$(2,5)$};

\node (x31g) at (29,29) [vertexWF] {$(3,1)$};

\node (x43g) at (33,27) [vertexWF] {$(4,3)$};

\node (x54g) at (34,26) [vertexWF] {$(5,4)$};

\draw [line width=0.03cm] (36,36) -- (36,24) -- (20,24) -- (20,36) -- (36,36);
\draw [line width=0.03cm] (28,36) -- (28,24);


\node [scale=1.0] at (4,17.5) {(d)};

\node (x12h) at (3,16) [vertexIn] {$(1,2)$};

\node (x21h) at (1,14) [vertexIn] {$(2,1)$};

\node (x33h) at (5,12) [vertexIn] {$(3,3)$};
\node (x35h) at (7,12) [vertexOut] {$(3,5)$};

\node (x43h) at (5,10) [vertexOut] {$(4,3)$};
\node (x45h) at (7,10) [vertexY] {$(4,5)$};

\node (x54h) at (6,9) [vertexY] {$(5,4)$};

\draw [->,line width=0.03cm] (x35h) -- (x33h);
\draw [->,line width=0.03cm] (x43h) -- (x45h);
\draw [->,line width=0.03cm] (x33h) -- (x43h);
\draw [->,line width=0.03cm] (x45h) -- (x35h);


\WithColor{ \node [scale=1.0] at (12,17.5) {{\color{yellow} $M_{\Workers}$} and {\color{blue} $M_{\Firms}$}}; }
\WithoutColor{ \node [scale=1.1] at (12,17.5) {$M_{\Workers}$ and $M_{\Firms}$}; }

\node (x12i) at (11,16) [vertexWF] {$(1,2)$};

\node (x21i) at (9,14) [vertexWF] {$(2,1)$};

\node (x33i) at (13,12) [vertexWF] {$(3,3)$};

\node (x45i) at (15,10) [vertexWF] {$(4,5)$};

\node (x54i) at (14,9) [vertexWF] {$(5,4)$};

\draw [->,line width=0.03cm] (x35e) -- (x33e);
\draw [->,line width=0.03cm] (x43e) -- (x45e);
\draw [->,line width=0.03cm] (x33e) -- (x43e);
\draw [->,line width=0.03cm] (x45e) -- (x35e);

\draw [line width=0.03cm] (16,19) -- (16,7) -- (0,7) -- (0,19) -- (16,19);
\draw [line width=0.03cm] (8,19) -- (8,7);


\node [scale=1.0] at (24,17.5) {(e)};

\node (x12j) at (23,16) [vertexIn] {$(1,2)$};

\node (x21j) at (21,14) [vertexIn] {$(2,1)$};

\node (x33j) at (25,12) [vertexOut] {$(3,3)$};
\node (x35j) at (27,12) [vertexY] {$(3,5)$};

\node (x43j) at (25,10) [vertexY] {$(4,3)$};
\node (x45j) at (27,10) [vertexY] {$(4,5)$};

\node (x54j) at (26,9) [vertexY] {$(5,4)$};

\draw [->,line width=0.03cm] (x35j) -- (x33j);
\draw [->,line width=0.03cm] (x43j) -- (x45j);
\draw [->,line width=0.03cm] (x33j) -- (x43j);
\draw [->,line width=0.03cm] (x45j) -- (x35j);


\WithColor{ \node [scale=1.0] at (32,17.5) {{\color{yellow} $M_{\Workers}$} and {\color{blue} $M_{\Firms}$}}; }
\WithoutColor{ \node [scale=1.1] at (32,17.5) {$M_{\Workers}$ and $M_{\Firms}$}; }

\node (x12k) at (31,16) [vertexWF] {$(1,2)$};

\node (x21k) at (29,14) [vertexWF] {$(2,1)$};

\node (x35k) at (35,12) [vertexWF] {$(3,5)$};

\node (x43k) at (33,10) [vertexWF] {$(4,3)$};

\node (x54k) at (34,9) [vertexWF] {$(5,4)$};

\draw [line width=0.03cm] (36,19) -- (36,7) -- (20,7) -- (20,19) -- (36,19);
\draw [line width=0.03cm] (28,19) -- (28,7);

\draw [->,line width=0.08cm] (16,24) -- (20,19);
\draw [->,line width=0.08cm] (10,24) -- (6,19);

\draw [->,line width=0.08cm] (13,41) -- (6,36);
\draw [->,line width=0.08cm] (19,41) -- (24,36);


\node (xww) at (28,52) [vertexWW] {$(-,-)$};
\node (xff) at (28,49) [vertexFF] {$(-,-)$};
\node (xwf) at (28,46) [vertexWF] {$(-,-)$};
\node [scale=0.9] at (32.4,52) {= pairs in $M_{\Workers}$};
\node [scale=0.9] at (32.4,49) {= pairs in $M_{\Firms}$}; 
\node [scale=0.9] at (32.4,46) {= pairs in both}; 
\node [scale=0.9] at (33,44.8) {$M_{\Workers}$ and $M_{\Firms}$};


\end{tikzpicture}
\caption{The binary search-tree.}
\label{fig:SearchTree}
\end{figure}
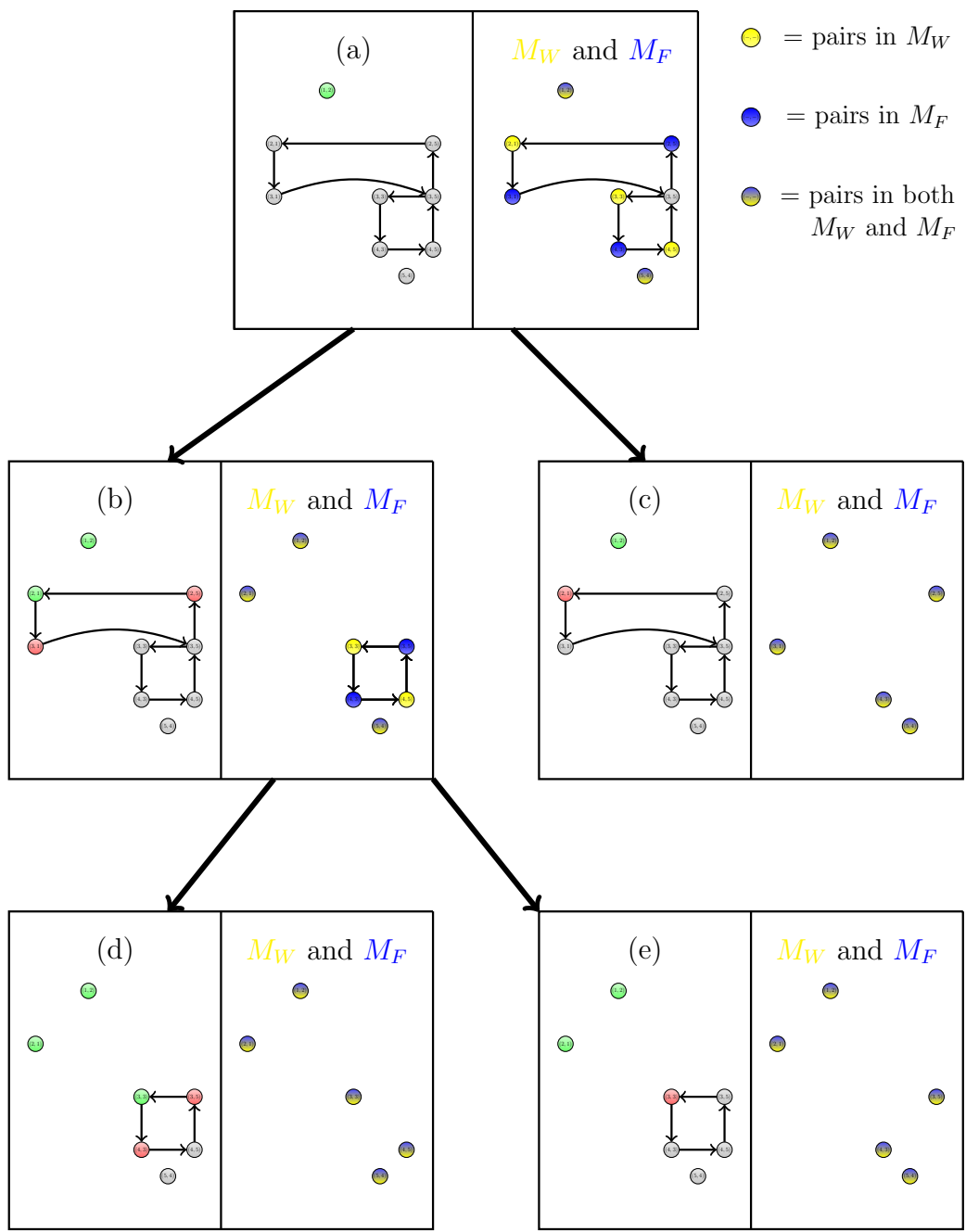

The first pair of panels in Figure~\ref{fig:SearchTree}, labelled (a), depicts the first time that Step~\ref{alg:ALLrec}.6 of Algorithm~\ref{alg:ALLrec} is called.
The left-hand panel reproduces panel (b) from Figure~\ref{fig:ex3} while the right hand panel reproduces Figure~\ref{fig:extremeConstrained}.
While there is nothing new depicted in these two panels, we include them to emphasise that this is the starting point of the recursive part of the algorithm.
From this point, the procedure ``splits'' indicated by the two thick black arrows from the top row to the second row.
One arrow goes ``left'' to the panels labelled (b) and the other goes ``right'' to the panels labelled (c).

Let us first consider the two adjoining panels in (b).
Note that in addition to vertex $(1,2)$ being required, it is now also the case that vertex $(2, 1)$ is required.
This is indicated by colouring vertex $(2,1)$ green in the left panel.
Given that vertex $(2,1)$ is required, it must be that vertices $(2, 5)$ and $(3, 1)$ cannot be simultaneously part of the stable matching (this is simply applying Step~\ref{alg:ALLrec}.1 to this subcase).
As such $(2, 5)$ and $(3, 1)$ are coloured red in the left hand panel.
Contrast this with the left panel of (c).
Here we note that vertex is $(2,1)$ is coloured red, indicating that it is forbidden.
In other words, (b) begins the part of the search tree where $(1,2)$ is required to be included and $(2,1)$ is required to be included; while (c) begins the part of the search tree where $(1,2)$ is required to be included and $(2,1)$ is required not to be included.
Clearly there can be no overlap in the set of stable matchings found by each branch.

In the right hand panel of (c), we see the consequences of requiring that vertex $(2,1)$ not be included.
Given this constraint, it must be that $(2,5)$ and $(3,1)$ are to be part of any stable matching.
Skipping some details, this immediately yields the stable matching $\set{(1,2), (2,5), (3,1), (4,3), (5,4)}$.

The same is not true of the right hand panel of (b).
Note that here there remains an anti-clockwise cycle involving vertices $\set{(3,3), (4,3), (4,5), (3,5)}$.
Again by appealing to Theorem 1 of \cite{GutinNeary:2023:GEB}, there must be at least two stable matchings satisfying the current constraints in this branch of the subtree.
Indeed this is true and is indicated by the fact that in the right hand panel of (b), $M_{\Workers{}} \neq M_{\Firms{}}$. Therefore, one more split is required.
The vertex that is pivoted around this time is $(3,3)$. In the left hand panel of (d), this vertex is coloured green indicating it must be included, while in the left hand panel of (e) this vertex is coloured red indicating it must not be included.

We conclude by noting that the algorithm identifies three stable matchings for the market of Reduced Example~\ref{exR:initial} that accord with the hypothetical market designer's assignment constraints as stipulated by {Reduced Question} \ref{q:QuestionR}. In order of decreasing worker preference, these three stable matchings are:
\begin{align*}
M^*_1 &= \set{(1,2), (2,1), (3,3), (4,5), (5,4)}		\\
M^*_2 &= \set{(1,2), (2,1), (3,5), (4,3), (5,4)}		\\
M^*_3 &= \set{(1,2), (2,5), (3,1), (4,3), (5,4)}
\end{align*}

As such, the market designer's Question \ref{q:QuestionGeneral} concerning the many-to-one matching market of Example~\ref{ex:initial} is answered in the affirmative with the three stable matchings satisfying the assignment constraints given by:
\begin{align*}
\match^*_1 &= \set{(\worker{1},\firm{2}), (\worker{2},\firm{1}), (\worker{3},\firm{3}), (\worker{4},\firm{4}), (\worker{5},\firm{4})}		\\
\match^*_2 &= \set{(\worker{1},\firm{2}), (\worker{2},\firm{1}), (\worker{3},\firm{4}), (\worker{4},\firm{3}), (\worker{5},\firm{4})}		\\
\match^*_3 &= \set{(\worker{1},\firm{2}), (\worker{2},\firm{4}), (\worker{3},\firm{1}), (\worker{4},\firm{3}), (\worker{5},\firm{4})}
\end{align*}


\newpage

\clearpage

\appendix

\section*{APPENDIX}\label{APP}

\section{Digraph terminology and notation}\label{sec:notation}
A {\em directed graph} (or just {\em digraph}) $D$ consists of a non-empty finite set $V (D)$ of elements called {\em vertices} and a finite set $A(D)$ of ordered pairs of distinct vertices called {\em arcs}. We shall call $V(D)$ the {\em vertex set} and $A(D)$ the {\em arc set} of $D$ and write $D=(V(D),A(D)).$ For an arc $xy$ the first vertex $x$ is its {\em tail} and the second vertex $y$ is its {\em head}. Moreover, $x$ is called an {\em in-neighbour} of $y$ and $y$ an {\em out-neighbour} of $x.$ We also say that the arc $xy$ {\em leaves} $x$ and {\em enters} $y$. 
We say that a vertex $x$ is {\em incident} to an arc $a$ if $x$ is the head or tail of $a$. For a vertex $v \in V(D)$, the {\em out-degree} of $v$ in $D$, $d^{+}_{D}(v)$, is the number of out-neighbours of $v$
Similarly, the {\em in-degree} of $v$ in $D$, $d^{-}_{D}(v)$, is the number of in-neighbours of $v$.
A vertex $u$ is {\em isolated} if $d^{+}_{D}(u)=d^{-}_{D}(u)=0.$

{
A {\em walk}, $W$, in a digraph $D$ is a sequence of vertices $x_1,x_2,\dots , x_p$ for which there is an arc from each vertex in the sequence to its successor in the sequence.  Such a walk is written as $W = x_1x_2\dots x_p$. Special cases of walks are paths and cycles. A walk $W$ is a {\em path} if the vertices of $W$ are distinct. If the vertices $x_1, x_2, \dots, x_{p-1}$ are distinct, for $p \geq 2$ and $x_1 = x_k$, then $W$ is a {\em cycle}.
}

For a digraph $D=(V,A)$ and an arc $xy\in A$, {\em deletion of $xy$ from} $D$ results in the digraph $D-xy=(V,A\setminus \{xy\}).$ For a vertex $v\in V$, {\em deletion of $v$ from} $D$ results in the digraph $D-v=(V\setminus \{v\},A\setminus A_v),$ where $A_v$ is the set of arcs in $A$ incident to $v$. A digraph $D'$ is called a {\em subdigraph} of $D$ if $D'$ is obtained from $D$ by deleting some vertices and arcs. 
If only vertices are deleted, $D'$ is an {\em induced subdigraph} of $D.$

\section{Proofs omitted from the main text}\label{app:proofs}

\begin{customlemma}{3}
In $D^*(P)$ there remain $r$ non-empty rows and $r$ non-empty columns (for some $r$) and all stable matchings in $D^*(P)$ contain $r$ vertices (that is, they are perfect matchings in $D^*(P)$). 
\end{customlemma}

\begin{proof}
Let $r_{\Workers}$ denote the number of non-empty rows in $D^*(P)$ and let 
$r_{\Firms}$ denote the number of non-empty columns in $D^*(P)$.
By definition, every row contains exactly one sink of that row (i.e., a vertex $v$ such that $d_{\Workers,i}^+(v)= 0$).
Moreover, by Lemma~\ref{lemma:extremalMatches} this vertex $v$ must be a source of a column (i.e., $d_{\Firms,i}^-(v)= 0$) and each column contains at most one source.
Putting this together, we have that $r_{\Firms} \geq r_{\Workers}$.
A mirror image argument shows that $r_{\Workers} \geq r_{\Firms}$, and so we have that $r_{\Workers} = r_{\Firms} = r$.

To show that all stable matchings in $D^*(P)$ are perfect matchings in $D^*(P)$ of size $r$, assume to the contrary. 
Then there must be a stable matching, $M$, of size strictly less than $r$ (clearly $M$ cannot contain more than $r$ pairs).
Now, let $\worker{}$ be an unmatched worker in $M$ and let $\firm{}$ denote the firm that has $\worker{}$ at the top of its preference list in $D^*(P)$ (i.e., the vertex $(\worker{}, \firm{})$ is the sink of the column corresponding to firm $\firm{}$).
Either (i) $\firm{}$ is not matched in $M$, or (ii) $\firm{}$ is paired with worker $\worker{}' \neq \worker{}$ in $M$.
No matter which holds, the matching $M$ is not stable as the pair $(\worker{}, \firm{})$ is a strict improvement for both $\worker{}$ and $\firm{}$.
\end{proof}

\2

\begin{customlemma}{4}
Let $D_i$ be any digraph obtained by running IDUA on some instance of the one-to-one two-sided matching market.
Assume that $D_i$ has $r$ rows and $r$ columns (and therefore, by Lemma~\ref{lemma:lem1}, all stable matchings are perfect matchings of size $r$).
Let $v \in V(D_i)$ be arbitrary, such that $d_{\Workers,i}^+(v)=0$ {or $d_{\Firms,i}^+(v)=0$} in $D_i$.
Then, the following two conditions are equivalent.
\begin{enumerate}[label=(\alph*)]
\item\label{eq:notInA}
$M$ is a stable matching in $D_i$ not containing $v$.

\item\label{eq:notInB}
$M$ is a stable matching in $D_i - v$.
\end{enumerate}
\end{customlemma}

\begin{proof}
Without loss of generality assume that  $d_{\Workers,i}^+(v)=0$.

We first show that \ref{eq:notInB} implies \ref{eq:notInA}.
Assume that $M$ is a stable matching in $D_i-v$ of size $r$.
Clearly $v \not\in M$ since $v \not\in D_i-v$, and so it suffices to show that $M$ is a stable matching in $D_i$.
As $d_{\Workers,i}^{+}(v)=0$ in $D_i$, we note that $v$ can have no arcs into it in $D_i$ from $A_{\Firms}$ 
(as $D_i$ is a digraph obtained by running IDUA on some instance).
But then, if $v=(\worker{},\firm{})$ and $u=(\worker{}',\firm{}) \in M$ ($u$ must exist as the size of $M$ is $r$), we must have that $vu \in A(D_i)$.
This implies that $v=(\worker{},\firm{})$ is not a blocking pair for $M$ in $D_i$.
Finally, because no vertex in $D_i-v$ is a blocking pair for $M$ in $D_i$ (as they are not blocking for $M$ in $D_i-v$), we have that $M$ must be a stable matching in $D_i$.

Now assume that \ref{eq:notInA} holds so that $M$ is a stable matching in $D_i$ not containing $v$.
Clearly $M$ is also a stable matching in $D_i-v$ because no vertex in $D_i$ is blocking for $M$, and so  \ref{eq:notInB} holds.

\end{proof}

\2

Before stating the theorem, recall that $m$ is the number of workers and $n$ is the number of firms in $P.$
Also, $N=\max\{m,n\}$, $M=\min\{m,n\}$ and $s$ is the number of stable matchings outputted by the algorithm.

\begin{customthm}{1}\label{}
Algorithm \ref{alg:ALLmain} correctly solves {\sc Reduced Constrained Stable Matchings} and 
and does so in time $O(N^3 + N^3s)$.  Moreover, the time between two consecutive stable matching outputted by the algorithm is $O(N^3)$.
\end{customthm}

By Lemma~\ref{lemma:iduaSameMatchings} the digraph $D^*(P)$ used in the call to Algorithm~\ref{alg:ALLrec} in Step~\ref{alg:ALLmain}.3 will contain 
exactly the same stable matchings as $D(P)$. So, if Algorithm~\ref{alg:ALLrec} adds all the desired stable matchings to $AllSolutions$ in the desired time, then Theorem~\ref{thm:ALLmain} holds.  We will now prove this.
The proof is partitioned into two parts, the proof of correctness and the proof of the running time claim. 
In what follows, stable matchings in $D(P)$ are referred to as {\em solutions}  and stable matchings that satisfy assignment constraints as {\em feasible solutions}. 

\vspace{2mm}

\noindent{\bf Proof of correctness.}

Clearly if $V^{in}\setminus V(D^{*})\ne \emptyset$ holds in Step~\ref{alg:ALLrec}.1, then there can be no solution in $D$ containing all the vertices in $V^{in}$, so the 
algorithm correctly returns without adding anything to $AllSolutions$.
Furthermore, as all solutions with $(w,f)$ in the solution contain no vertices  different from $(w,f)$, in the same row or column as $(w,f)$, we note that the 
second part of Step~\ref{alg:ALLrec}.1 does not change which solutions satisfy the constraints.

By Lemma~\ref{lemma:iduaSameMatchings} and Lemma~\ref{lem:NotIn} we note that every operation in Step~\ref{alg:ALLrec}.2 maintains the same set of feasible solutions. 

By Lemma~\ref{lemma:extremalMatches} we note that both $M_{\Workers}$ and $M_{\Firms}$ defined in Step~\ref{alg:ALLrec}.3 are stable matchings in $D$. As 
$(w,f) \not\in V^{out}$ for all $(w,f) \in V(D)$ where $d_{\Workers}^+((w,f))=0$ or $d_{\Firms}^+((w,f))=0$ we furthermore note that 
neither $M_{\Workers}$ or $M_{\Firms}$ contain any vertices from $V^{out}$. If 
$|M_{\Workers}|=r$ (which is equivalent to $|M_{\Firms}|=r$ as $|M_{\Workers}|=|M_{\Firms}|$) then by the operation in Step~\ref{alg:ALLrec}.1 we note that all
vertices in $V^{in}$ belong to both $M_{\Workers}$ and $M_{\Firms}$ (as no vertices from the rows or columns containing vertices in $V^{in}$ 
belong to $M_{\Workers}$ or $M_{\Firms}$). So if $|M_{\Workers}|=r$ then both $M_{\Workers}$ and $M_{\Firms}$ are stable matchings satisfying the 
constraints. 

If $|M_{\Workers}|<r$ in Step~\ref{alg:ALLrec}.4, then clearly there is no stable matching in $D$ of size $r$, which implies that no stable matching can be added to $AllSolutions$ (as all solutions in $AllSolutions$ must have size $r$, by Step~1.2 and Lemma~\ref{lemma:lem1}). So if Algorithm~\ref{alg:ALLrec} returns in Step~\ref{alg:ALLrec}.4, then there is no feasible solution in $D$.

If Algorithm~\ref{alg:ALLrec} does not return at Step~\ref{alg:ALLrec}.4, we can be sure that the digraph $D$ being considered at Step~\ref{alg:ALLrec}.5 has both $|M_{\Workers}|=r$ and $|M_{\Firms}|=r$, where $r$ is the size of a stable matching. If $M_{\Workers}=M_{\Firms}$ in Step~\ref{alg:ALLrec}.5, then $M_{\Workers}$ is the unique solution in $D$. We know this to be the case by appealing to Lemma 3 and Theorem 1 in \cite{GutinNeary:2023:GEB}. That is, Lemma 3 in \cite{GutinNeary:2023:GEB} confirms that both $M_{\Workers}$ and $M_{\Firms}$ are always perfect stable matchings of size $r$ in the normal form, and Theorem 1 in \cite{GutinNeary:2023:GEB} shows that if $M_{\Workers}=M_{\Firms}$ then there is a unique stable matching of size $r$ in the normal form.

In Step~\ref{alg:ALLrec}.6 we note that every solution in $D$ that satisfies the constraints will be found in the recursive call in Step~\ref{alg:ALLrec}.6a if $(w,f)$ belongs to the solution and it will be found in the recursive call in Step~\ref{alg:ALLrec}.6b if $(w,f)$ does not belongs to the solution. 
So every solution will be found in one of the recursive calls. Furthermore we note that no solution can be found in both the recursive calls in Step~\ref{alg:ALLrec}.6 as the solution either contains $(w,f)$ or it does not.
Therefore all solutions will be found and no solution will be found more than once. 
This proves the correctness of the algorithm. \qed

\vspace{2mm}

Before we prove the time complexity claim, we state and prove two useful lemmas.

We note that all steps, that do not involve a call to Algorithm~\ref{alg:ALLrec}, in Algorithm~\ref{alg:ALLmain} and in Algorithm~\ref{alg:ALLrec} can be performed in polynomial time. This is not difficult to see as IDUA runs in polynomial time and in Step~\ref{alg:ALLrec}.2 we cannot delete more than a polynomial number of vertices. We will now show the following:

\begin{lemma}\label{lem:T}
The number of calls to Algorithm~\ref{alg:ALLrec} is bounded by $2s+1$, where $s$ is the total number of feasible solutions outputted by Algorithm~\ref{alg:ALLmain}. 
\end{lemma}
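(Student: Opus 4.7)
The plan is to view the execution of Algorithm~\ref{alg:ALLmain} as a binary tree whose nodes are the calls to Algorithm~\ref{alg:ALLrec}, with the two recursive calls made in Step~\ref{alg:ALLrec}.6 as children. A node is a \emph{leaf} if it returns without making recursive calls (i.e., terminates at Step~\ref{alg:ALLrec}.1, \ref{alg:ALLrec}.4, or \ref{alg:ALLrec}.5), and an \emph{internal node} otherwise. Let $T$, $L$, and $I$ denote the total number of calls, leaves, and internal nodes respectively. Since every internal node has exactly two children, the standard identity $L = I + 1$ for full binary trees yields $T = L + I = 2L - 1$.

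The heart of the argument is the following claim: whenever a call reaches Step~\ref{alg:ALLrec}.6, each of its two recursive children has a subtree that contributes at least one stable matching to $AllSolutions$. I would establish this using the matchings $M_{\Workers}$ and $M_{\Firms}$ computed at the parent's Step~\ref{alg:ALLrec}.3. The correctness argument for Theorem~\ref{thm:ALLmain} (which I would treat as a prior input to this counting lemma) shows that both are stable matchings of size $r$ in the parent's digraph and satisfy the parent's constraints, and by choice of the split vertex we have $(w,f) \in M_{\Workers} \setminus M_{\Firms}$. For the Step~\ref{alg:ALLrec}.6a child, which appends $(w,f)$ to $V^{in}$, the matching $M_{\Workers}$ remains feasible: the vertices newly placed in $V^{out}$ by the child's Step~\ref{alg:ALLrec}.1 lie in the row or column of $(w,f)$ and so are disjoint from $M_{\Workers}$, and by iterated application of Lemma~\ref{lem:NotIn} together with Lemma~\ref{lemma:iduaSameMatchings}, the deletions and IDUA runs performed by the child's Step~\ref{alg:ALLrec}.2 preserve $M_{\Workers}$ as a stable matching of size $r$ in the child's updated digraph. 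Correctness then guarantees that $M_{\Workers}$ is produced somewhere in that subtree. The symmetric argument, using $M_{\Firms}$ and the fact that $(w,f) \notin M_{\Firms}$, handles the Step~\ref{alg:ALLrec}.6b child.

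Given the claim, a short induction on the depth of a subtree shows that every leaf strictly below an internal node must terminate at Step~\ref{alg:ALLrec}.5 and therefore add exactly one feasible solution; for if such a leaf returned at Step~\ref{alg:ALLrec}.1 or \ref{alg:ALLrec}.4 it would contribute nothing, violating the claim applied to its parent. Combining this with the correctness fact that each feasible solution is output exactly once, if the root is internal then $L = s$ and $T = 2L - 1 = 2s - 1 \leq 2s + 1$. In the remaining case the root is itself a leaf, so $T = 1$ and $s \in \{0,1\}$, whence again $T \leq 2s + 1$.

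The main obstacle I foresee is verifying that $M_{\Workers}$ (respectively $M_{\Firms}$) truly survives the potentially many rounds of deletion-plus-IDUA in the child's Step~\ref{alg:ALLrec}.2. This calls for a careful induction on the number of such rounds, checking at each iteration that the deleted vertex lies outside the tracked matching, that the current digraph is IDUA-reduced so that Lemma~\ref{lem:NotIn} is applicable, and that the number of non-empty rows and columns remains $r$ so that Lemma~\ref{lemma:lem1} continues to pin stable matchings to full size.
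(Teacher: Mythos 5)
Your proof is correct and follows essentially the same route as the paper's: both build the binary search tree of recursive calls, use the full-binary-tree identity $|V(T)|\le 2l-1$, and reduce the count to showing that every non-root leaf deposits a solution because $M_{\Workers}$ survives into the Step~\ref{alg:ALLrec}.6a branch and $M_{\Firms}$ into the Step~\ref{alg:ALLrec}.6b branch. Your extra care in tracking $M_{\Workers}$ through the child's Step~\ref{alg:ALLrec}.1 and Step~\ref{alg:ALLrec}.2 via Lemmas~\ref{lemma:iduaSameMatchings} and~\ref{lem:NotIn} is a fuller justification of the ``core idea'' the paper states more briefly, but it is not a different argument.
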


\begin{proof}
Consider the binary search tree $T$ where every node corresponds to a call to Algorithm~\ref{alg:ALLrec} and the root corresponds to the very first call to Algorithm~\ref{alg:ALLrec}. 
If, from an iteration of Algorithm~\ref{alg:ALLrec} corresponding to node $x$ in $T$, we make two recursive calls in Step~\ref{alg:ALLrec}.6 then the two nodes corresponding to these
two calls will be the two children of $x$ in the search tree.
And if the iteration returns in Steps~\ref{alg:ALLrec}.1, \ref{alg:ALLrec}.4 or \ref{alg:ALLrec}.5 then $x$ is a leaf of the tree.
Therefore every node of the tree either has two children or none (i.e., it is a leaf). Let $l$ denote the number of leaves in $T$. It is well-known that $|V(T)| \leq 2l-1$ (this is true if $|V(T)|=l=1$ and if $|V(T)|>1$ we can use induction on the two subtrees of the root). 

As the stable matching $M_{\Workers}$ will be found in the recursive call made in Step~\ref{alg:ALLrec}.6a and the
stable matching $M_{\Firms}$ will be found in the recursive call made in Step~\ref{alg:ALLrec}.6b we note that the only way that a call to Algorithm~\ref{alg:ALLrec} will not produce a solution is if it is made from Algorithm~\ref{alg:ALLmain}. Therefore every leaf, apart from possibly the root, will add a solution to $AllSolutions$. Therefore the following holds.
\[
|V(T)| \leq 2l-1 \leq 2(s+1)-1 = 2s+1
\]
As $|V(T)|$ is the number of calls to Algorithm~\ref{alg:ALLrec} and $s$ was the total number of feasible solutions returned, we note that our algorithm performs a polynomial amount of work per solution found (if there is no solution the algorithm also performs a polynomial amount of work).  This completes the proof.
\end{proof}

Recall that $N=\max\{n,m\}$.
Now, for each worker, $w$, let $f'(w)$ denote the firm such that $d_{\Workers}^+((w,f'(w)))=0$, and for each firm, $f$, let $w'(f)$ denote the worker such that $d_{\Firms}^+((w'(f),f))=0$.
The workers and firms above can be found in $O(nm)$ time (we need $O(m)$ time for each row in $V(D)$ and $O(n)$ time for each column in $V(D)$).

To every vertex $(\worker{}, \firm{})$, we assign a flag, $\theta(\worker{}, \firm{})$, that tells us if the vertex belongs to $V^{in}$ or $V^{out}$ or neither.
These flags can also be initialised in $O(nm)$ time.
Another lemma required to show the desired complexity is the following:

\begin{lemma}\label{CompX}
If IDUA removes $p$ vertices from $D$ then it runs in at most $O(N + Np)$ time, even if it has to update $f'(w)$ and $w'(f)$ for 
some $w$ and/or $f$.
\end{lemma}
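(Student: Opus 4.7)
My plan is to construct an implementation of IDUA whose per-deletion cost is $O(N)$; combined with an $O(N)$ initialisation this will yield the claimed $O(N + Np)$ bound. The key algorithmic observation is that for any vertex $(w,f)$ remaining in $D$, we have $d_{\Workers}^+((w,f)) = 0$ if and only if $f = f'(w)$, and $d_{\Firms}^+((w,f)) = 0$ if and only if $w = w'(f)$. Since the out-degrees only decrease as vertices are removed, a vertex can freshly acquire the ``$d^+=0$'' witness of Definition~\ref{def:reduce} exactly at the moment it becomes the new top of its row or column, i.e., exactly when the pointer $f'(w)$ or $w'(f)$ advances.

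I would maintain, alongside the digraph $D$ (which supports $O(1)$ vertex deletion), the pointers $f'(w)$ and $w'(f)$, the flag $\theta$ at each remaining vertex, a separate marker indicating whether a vertex has been enqueued, and a work queue $Q$ of vertices scheduled for deletion. First I would handle the initialisation: IDUA is invoked on a digraph in which at most $O(1)$ pointers are stale (in Step~\ref{alg:ALLrec}.2 of Algorithm~\ref{alg:ALLrec} only $f'(w)$ and $w'(f)$ for the freshly deleted vertex could need refreshing), so a single $O(N)$ pass through the two affected row and column lists refreshes them. Each pointer advance creates a new ``$d^+ = 0$'' witness at the new top, and Definition~\ref{def:reduce} then prescribes a set of vertices (lying below the new top in the relevant row or column) to be enqueued; this enqueueing sweep is $O(N)$.

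The main loop then repeatedly pops $(w,f)$ from $Q$ and deletes it from $D$ in $O(1)$. Whichever of $f'(w)$ or $w'(f)$ was pointing at $(w,f)$ is refreshed in $O(N)$ by scanning to the next surviving entry, and the same trigger-check-and-enqueue procedure is performed in $O(N)$. Because each vertex is marked upon enqueueing, no vertex is ever pushed or popped twice. Hence each of the $p$ deleted vertices contributes $O(N)$ work to the main loop, giving a total running time of $O(N + Np)$.

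The main obstacle I anticipate is proving completeness: that this procedure enqueues every vertex iterated application of $\reduce$ would delete. By the if-and-only-if characterisation above, the only instant at which a fresh trigger can appear between pops is when the popped vertex sits at the top of its row or column and its deletion advances the corresponding pointer, which is precisely the moment the algorithm inspects. A straightforward invariant then shows that after $Q$ empties no surviving vertex satisfies the trigger condition of either part of Definition~\ref{def:reduce}, so the remaining digraph is the fixed point of IDUA, delivering both correctness and the stated $O(N + Np)$ running time.
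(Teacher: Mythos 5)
Your underlying mechanism --- maintaining the top-of-row and top-of-column pointers $f'(w)$ and $w'(f)$ and charging $O(N)$ of work to each deleted vertex --- is the same amortisation the paper's proof uses, and your characterisation ``$d^{+}_{\Workers}((w,f))=0$ iff $f=f'(w)$'' is correct. The gap is in how your event-driven formulation discovers triggers. You only enqueue a deletion set at the moment a pointer advances, and at initialisation you assume ``at most $O(1)$ pointers are stale''. That assumption is valid for the calls to IDUA made inside Step~\ref{alg:ALLrec}.2 (there the input is a fixed point of $\reduce$ from which a single vertex has just been removed), but Lemma~\ref{CompX} is also invoked for Step~\ref{alg:ALLmain}.1, where IDUA is run on the raw digraph $D(P)$. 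On that input every row-top and column-top may already satisfy the hypothesis of Definition~\ref{def:reduce} with a non-empty deletion set, yet none of your pointers is stale and no pointer advances until something has been deleted --- so your queue starts empty, the invariant ``after $Q$ empties no surviving vertex satisfies the trigger condition'' is already false at time zero, and your algorithm terminates having deleted nothing.

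Patching this by sweeping the deletion set of every initial pivot costs $O(N)$ per pivot, and there can be $\Theta(N)$ such pivots, i.e.\ $\Theta(N^2)$ work before a single deletion has been charged, which exceeds the claimed $O(N+Np)$ whenever $p=o(N)$. (It can be rescued by a finer charging argument: using the row and column linked lists, every vertex visited by a sweep is one of the $p$ vertices that will eventually be deleted and is visited $O(N)$ times in total; but you would need to make that argument explicitly.) The paper sidesteps the issue by pulling rather than pushing: it repeatedly scans the at most $2N$ current tops for one whose in-neighbourhood in the relevant arc set is non-empty ($O(N)$ per scan with $O(1)$ degree tests), and every successful scan produces at least one deletion, so the total scanning cost is $O(N(p+1))$ on an arbitrary input. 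Either restrict your version of the lemma to the almost-reduced inputs of Step~\ref{alg:ALLrec}.2 and treat Step~\ref{alg:ALLmain}.1 separately, or add the initial trigger sweep together with the refined accounting.
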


\begin{proof}
IDUA repeatedly performs the following steps.
\begin{itemize}
\item If for some $w$ we have $d_{\Firms}^-((w,f'(w)))>0$, then
remove all vertices $(w^*,f'(w))$ such that $(w^*,f'(w)) (w,f'(w)) \in A_{\Firms}$ and update $f'(w^*)$ accordingly.
\item If for some $f$ we have $d_{\Workers}^-((w'(f),f))>0$, then
remove all vertices $(w'(f),f^*)$ such that $(w'(f),f^*) (w'(f),f) \in A_{\Workers}$, and update $w'(f^*)$ accordingly.
\end{itemize}
Clearly, if no vertex is removed by IDUA, then 
$$d_{\Firms}^-((w,f'(w)))=0 \mbox{ and } d_{\Workers}^-((w'(f),f))=0$$ for all $w$ and $f$, 
and we have spent $O(N)$ time.
As updating $f'(w^*)$ and $w'(f^*)$ above can be done in constant time, we note that every time IDUA removes a vertex we have not spent
more than $O(N)$ time since removing the previous vertex, which completes the proof.
\end{proof}

\vspace{2mm}

\noindent{\bf Proof of complexity.}
By Lemma~\ref{CompX} we note that Step~\ref{alg:ALLmain}.1 in Algorithm~\ref{alg:ALLmain} can be done in time $O(N^3)$ time  (in fact it can be done in $O(N^2)$ time, but we do not need that here).
Step~\ref{alg:ALLmain}.2 can be done in $O(N)$ time and Steps~\ref{alg:ALLmain}.3 and \ref{alg:ALLmain}.4 can be done in constant time. So Algorithm~\ref{alg:ALLmain} can be performed in $O(N^3)$ time plus the complexity of Algorithm~\ref{alg:ALLrec}, which we analyse now.

In Algorithm~\ref{alg:ALLrec}, Step~\ref{alg:ALLrec}.1 can be done in $O(N^2)$ time. We now analyse Step~\ref{alg:ALLrec}.2. By considering the vertices $(w,f'(w))$ and $(w'(f),f)$ for all
$w$ and $f$, using the flags $\theta(\worker{}, \firm{})$, and using Lemma~\ref{CompX}, we note that Step~2.2 can be performed in $O(N^3 + Np)$ time, where $p$ is the number of vertices that 
get removed in Step~\ref{alg:ALLrec}.2 and $p \leq N^2$. Therefore Step~\ref{alg:ALLrec}.2 can be performed in $O(N^3)$ time.
Steps~\ref{alg:ALLrec}.3, \ref{alg:ALLrec}.4 and \ref{alg:ALLrec}.5 can all be performed in $O(N)$ time. Step~\ref{alg:ALLrec}.6 can be performed in $O(N^2)$ time even if we make a new copy of our instance to be used in each of the recursive calls.

So the total time spent in one iteration of Algorithm~\ref{alg:ALLrec} is $O(N^3)$ and as, by Lemma \ref{lem:T}, the number of such iterations is proportional to the number of solutions found, we obtain the complexity $O(N^3 + N^3s)$.

The analyses above of the running time of Algorithm~\ref{alg:ALLmain} without the complexity of Algorithm~\ref{alg:ALLrec} and of one iteration of Algorithm~\ref{alg:ALLrec} imply that the time between outputting two consecutive stable matchings by Algorithm~\ref{alg:ALLmain} is   $O(N^3)$. \qed

\section{Example showing that one can not just delete all vertices in $V^{out}$}\label{app:exampleDelete}

\begin{figure}[h!]
\begin{center}
\tikzstyle{vertexY}=[circle,draw, top color=gray!60, bottom color=gray!99, minimum size=16pt, scale=0.7, inner sep=0.99pt]
\tikzstyle{vertexZ}=[circle,draw, top color=gray!1, bottom color=gray!1, minimum size=16pt, scale=0.7, inner sep=0.99pt]
\begin{tikzpicture}[scale=1.0]
\node [scale=0.9] at (1,6.5) {$f_1$};
\node [scale=0.9] at (3,6.5) {$f_2$};
\node [scale=0.9] at (5,6.5) {$f_3$};

\node [scale=0.9] at (-1,5) {$w_1$};
\node (x11) at (1,5) [vertexY] {$(1,1)$};
\node (x12) at (3,5) [vertexZ] {$(1,2)$};
\node (x13) at (5,5) [vertexY] {$(1,3)$};

\node [scale=0.9] at (-1,3) {$w_2$};
\node (x21) at (1,3) [vertexZ] {$(2,1)$};
\node (x22) at (3,3) [vertexY] {$(2,2)$};
\node (x23) at (5,3) [vertexY] {$(2,3)$};

\node [scale=0.9] at (-1,1) {$w_3$};
\node (x31) at (1,1) [vertexY] {$(3,1)$};
\node (x32) at (3,1) [vertexY] {$(3,2)$};
\node (x33) at (5,1) [vertexZ] {$(3,3)$};

\draw [->,line width=0.03cm] (x11) -- (x12);
\draw [->,line width=0.03cm] (x12) -- (x13);
\draw [->,line width=0.03cm] (x13) -- (x23);
\draw [->,line width=0.03cm] (x23) -- (x33);
\draw [->,line width=0.03cm] (x33) to [out=210,in=330] (x31);
\draw [->,line width=0.03cm] (x31) -- (x32);
\draw [->,line width=0.03cm] (x32) to [out=120,in=240] (x12);
\draw [->,line width=0.03cm] (x12) -- (x22);
\draw [->,line width=0.03cm] (x22) -- (x23);
\draw [->,line width=0.03cm] (x23) to [out=210,in=330] (x21);
\draw [->,line width=0.03cm] (x21) -- (x31);
\draw [->,line width=0.03cm] (x31) to [out=120,in=240] (x11);

\end{tikzpicture}
\end{center}
\caption{The example depicted does not change by running IDUA on it since both extreme matchings are stable matchings. (Arcs implied by transitivity are not shown).}
\label{NewExV}
\end{figure}

Consider the matching problem $P$ equivalent to the digraph $D(P)$ in Figure~\ref{NewExV}.
Let $V^{out}=\{(1,1),(1,3),(2,2),(2,3),(3,1),(3,2)\}$
(the dark vertices in Figure~\ref{NewExV}). 
One cannot just remove all vertices in $V^{out}$ and then find a stable matching in the remaining digraph.
Doing so would find the matching $M=\{(1,2),(2,1),(3,3)\}$ (the white vertices) that is stable in the remaining subdigraph.
However $M$ is not a stable matching in the original problem as $(3,1)$ is a blocking pair for $M$.

In fact, if $V^{out}=\{(3,1)\}$ then we note that $M$ (the white vertices in Figure~\ref{NewExV}) is a stable matching in $D-(3,1)$, but $M$ is not a stable matching in $D$ avoiding $(3,1)$.
This explains why we can only delete vertices with no horizontal arcs out of them or vertices with no vertical arcs out of them in Step 2.2 of our algorithm.
This also illustrates why the condition $d_{\Workers,i}^+(v)=0$ (or $d_{\Firms,i}^+(v)=0$) is required in Lemma~\ref{lem:NotIn}.

\section{Example with exponential stable matchings}\label{app:exampleExponential}

It is always possible to solve {\sc Constrained Stable Matchings} by using one of the algorithms that outputs the full collection of stable matchings and then checking each against the assignment constraints.
One drawback to this approach is that the number of stable matchings can be exponential and so carrying out this procedure can be highly inefficient.
The inefficiency from proceeding in this manner is exacerbated if the number of stable matchings satisfying the constraints is small (or zero).

We highlight this potential issue with the following example.
Suppose that the normal form of the matching market under question has the digraph depicted in Figure~\ref{fig:multiple} with $r$ workers and $r$ positions.

\begin{figure}[h!]
\begin{center}
\tikzstyle{vertexDOT}=[scale=0.23,circle,draw,fill]

\tikzstyle{vertexZ}=[circle,draw, top color=\matchingC{}!20, bottom color=\matchingC{}!50, minimum size=11pt, scale=0.7, inner sep=0.99pt]
\tikzstyle{vertexWF}=[circle,draw, top color=blue!70, bottom color=yellow!100, minimum size=11pt, scale=0.5, inner sep=0.99pt]
\tikzstyle{vertexF}=[circle,draw, top color=blue!50, bottom color=blue!50, minimum size=11pt, scale=0.7, inner sep=0.99pt]

\tikzstyle{vertexY}=[circle,draw, top color=gray!10, bottom color=gray!40, minimum size=11pt, scale=0.7, inner sep=0.99pt]

\WithColor{\tikzstyle{vertexIn}=[circle,draw, top color=green!20, bottom color=green!60, minimum size=11pt, scale=0.5, inner sep=0.99pt]}
\WithoutColor{\tikzstyle{vertexIn}=[circle,draw, top color=gray!1, bottom color=gray!1, minimum size=21pt, scale=0.5, inner sep=2.99pt]}

\WithColor{\tikzstyle{vertexOut}=[circle,draw, top color=red!20, bottom color=red!60, minimum size=11pt, scale=0.5, inner sep=0.99pt]}
\WithoutColor{\tikzstyle{vertexOut}=[circle,draw, top color=gray!99, bottom color=gray!99, minimum size=21pt, scale=0.5, inner sep=2.99pt]}

\begin{tikzpicture}[scale=0.8]
\node [scale=0.9] at (1,12.5) {$f_1$};
\node [scale=0.9] at (2,12.5) {$f_2$};
\node [scale=0.9] at (3,12.5) {$f_3$};
\node [scale=0.9] at (4,12.5) {$f_4$};
\node [scale=0.9] at (5.1,12.5) {\dots};
\node [scale=0.9] at (5.9,12.5) {\dots};

\node [scale=0.9] at (-1,11) {$w_1$};
\node (x11) at (1,11) [vertexZ] {$$};
\node (x12) at (2,11) [vertexF] {$$};

\node [scale=0.9] at (-1,10) {$w_2$};
\node (x21) at (1,10) [vertexF] {$$};
\node (x22) at (2,10) [vertexZ] {$$};

\node [scale=0.9] at (-1,9) {$w_3$};
\node (x33) at (3,9) [vertexZ] {$$};
\node (x34) at (4,9) [vertexF] {$$};

\node [scale=0.9] at (-1,8) {$w_4$};
\node (x43) at (3,8) [vertexF] {$$};
\node (x44) at (4,8) [vertexZ] {$$};

\node [scale=0.9] at (-1,7) {\vdots};

\node [scale=0.9] at (-1,6.5) {\vdots};

\node [scale=0.9] at (-1,5) {$w_{r-3}$};
\node [scale=0.9] at (-1,4) {$w_{r-2}$};
\node [scale=0.9] at (7,12.5) {$f_{r-3}$};
\node [scale=0.9] at (8,12.5) {$f_{r-2}$};
\node (x55) at (7,5) [vertexZ] {$$};
\node (x56) at (8,5) [vertexF] {$$};
\node (x65) at (7,4) [vertexF] {$$};
\node (x66) at (8,4) [vertexZ] {$$};
\draw [->,line width=0.03cm] (x55) -- (x65);
\draw [->,line width=0.03cm] (x65) -- (x66);
\draw [->,line width=0.03cm] (x66) -- (x56);
\draw [->,line width=0.03cm] (x56) -- (x55);

\node [scale=0.9] at (-1,3) {$w_{r-1}$};
\node [scale=0.9] at (-1,2) {$w_r$};
\node [scale=0.9] at (9,12.5) {$f_{r-1}$};
\node [scale=0.9] at (10,12.5) {$f_r$};
\node (x77) at (9,3) [vertexZ] {$$};
\node (x78) at (10,3) [vertexF] {$$};
\node (x87) at (9,2) [vertexF] {$$};
\node (x88) at (10,2) [vertexZ] {$$};
\draw [->,line width=0.03cm] (x77) -- (x87);
\draw [->,line width=0.03cm] (x87) -- (x88);
\draw [->,line width=0.03cm] (x88) -- (x78);
\draw [->,line width=0.03cm] (x78) -- (x77);

\draw [->,line width=0.03cm] (x12) -- (x11);

\draw [loosely dotted,line width=0.03cm] (4.4,7.6) -- (6.6,5.4);

\draw [->,line width=0.03cm] (x21) -- (x22);

\draw [->,line width=0.03cm] (x34) -- (x33);

\draw [->,line width=0.03cm] (x43) -- (x44);



\draw [->,line width=0.03cm] (x11) -- (x21);

\draw [->,line width=0.03cm] (x22) -- (x12);

\draw [->,line width=0.03cm] (x33) -- (x43);

\draw [->,line width=0.03cm] (x44) -- (x34);


\end{tikzpicture}
\end{center}
\caption{Market with $2^\frac{n}{2}$ stable matchings.}

\label{fig:multiple}
\end{figure}
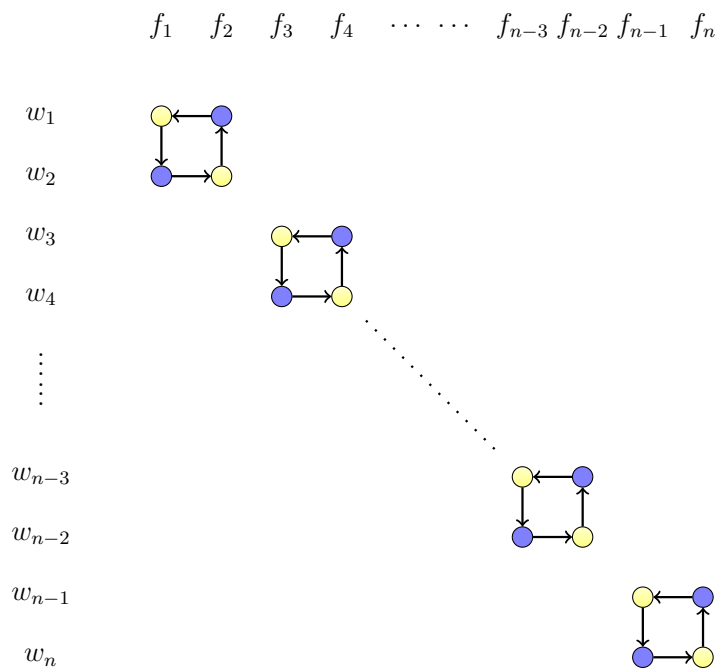

Each 4-cycle contains two stable components, one worker optimal and one firm optimal.
So the market as a whole has $2^\frac{r}{2}$ stable matchings.
However, while there are exponential number of stable matchings, we will see that only four of them satisfy the assignment constraints.
We have highlighted the two extremal stable matchings using colour coding as before (worker-optimal in {yellow} and firm-optimal in {blue}).

Consider constraints of the form $V^{out}=\set{(k,k)}_{k=5}^{r}$.
That is, for all $k \geq 5$, the constraints do not permit that a worker with index $k$ be employed at a position with the same index.
Referencing Figure~\ref{fig:multiple}, these constraints rule out any stable matching that includes a pair shaded yellow, other than those yellow vertices in positions $(1,1), (2,2), (3,3)$ and $(4,4)$.
As such, from index $k=5$ onwards, only blue vertices can be selected.
We note that there are only four stable matchings satisfying the constraints.

This example is one that, to the best of our knowledge, only our algorithm can handle efficiently.
There are an exponential number of stable matchings but only a small number, specifically four, satisfying the constraints.
(While we concede that the example is somewhat artificial, we have chosen it as other, potentially more realistic, examples would not be so cleanly depicted.)

\newpage

\bibliographystyle{plainnat}
\bibliography{withWithoutReferences.bib}

\end{document}